\documentclass{article} 
\usepackage{iclr2027_conference,times}
\usepackage{hyperref}
\hypersetup{pdfborder={0 0 0}}
\renewcommand{\iclrruler}[1]{}
\usepackage{url}
\usepackage{amsmath,amssymb,amsthm}
\usepackage{booktabs,multirow,tabularx}
\usepackage{graphicx}
\usepackage{stfloats}
\usepackage{float}
\usepackage{xspace}
\usepackage{xcolor}
\newcommand{\upd}[1]{\textcolor{cyan}{#1}}
\newcommand{\obs}[1]{\textcolor{orange}{#1}}

\newtheorem{theorem}{Theorem}[section]
\newtheorem{proposition}[theorem]{Proposition}

\newcommand{\Kn}{\mathrm{Kn}}
\newenvironment{AIdeclaration}{\paragraph*{AI use declaration}}{}
\newenvironment{Reprostatement}{\paragraph*{Reproducibility statement}}{}

\let\cite\citep

\makeatletter
\g@addto@macro\normalsize{%
  \setlength{\abovedisplayskip}{5pt plus 1pt minus 2pt}%
  \setlength{\belowdisplayskip}{5pt plus 1pt minus 2pt}%
  \setlength{\abovedisplayshortskip}{3pt plus 1pt minus 1pt}%
  \setlength{\belowdisplayshortskip}{3pt plus 1pt minus 1pt}%
}
\makeatother
\title{A Kinetic Theory of the Gated Self-Evolving LLM Agent}
\author{Haipeng Wang\\
Neusoft Corporation\\
\texttt{haipeng\_wang@neusoft.com}}

\begin{document}

\maketitle

\begin{abstract}

We find traces of fluid dynamics in the self-evolution of an LLM agent, and give the kinetic theory that predicts them. Gated self-evolution is the loop in which an agent rewrites its own skills under a validation gate. Self-evolution research has treated the agent as the unit; we study instead the individual instances inside it. Here the agent is DSH-plugin-based: it runs in production on DeepSeek Harness (DSH), and its plugins satisfy four architectural properties (permutation symmetry, reversibility, acyclicity, typed contracts), which license treating these instances as identical hard spheres; the theory is accordingly scoped to DSH-class plugin populations. On this scope the paper builds three theory layers. The rigorous layer, independent of any analogy, comprises an any-time hitting-time certificate bounding the expected rounds to any prescribed improvement, a resolution law that prices held-out validation budgets, and a separation theorem: the daemon must stay outside the population, because merging evaluator with evaluated voids the certificate. The kinetic layer is a master equation over the plugin~$\times$~version~$\times$~task grid with four operators (collision, reaction, external field, gate), where collision is co-activation. Its moment hierarchy, the step that turns a gas into fluid equations, generates the falsifiable statistical signatures. Throughout, the fluid reading is a bounded analogy: momentum is \emph{not} conserved, so no Navier--Stokes limit exists. The measured layer runs on a faithful minimal instance, a large library of four-parameter skill plugins retrieved one per episode with a co-activation probe, in a one-model, one-task-family WebShop environment; every element maps to the DSH loop by architectural role. Population fluctuation scaling is density-gated: invisible at sparse edit density, it emerges at the predicted rate under tripled density, as directional evidence.

\end{abstract}

\section{Introduction}

Self-evolving LLM agents write their own skills and tools from trajectory evidence, mount them into a registry, and keep a change only when it strictly improves a held-out validation score~\cite{luo2026skillopt}. Prior work on such agents has treated the agent itself as the unit of study~\cite{shinn2024reflexion,yin2025godel,sgm2025statistical}. We looked instead at what happens inside the registry during the agent's run, where a population of individual instances is repeatedly retrieved into episodes, edited, and accepted or rejected by the gate. When this population is large, the running of these instances carries traces of fluid dynamics in its statistics, much like a gas moving under an external selector. The natural language for such behavior is the kinetic theory of gases, today grounded in the derivation of Boltzmann's equation from hard-sphere dynamics~\cite{lanford1975,deng2024longtime}.

We develop this observation into a kinetic theory of the gated plugin population. In studying \textsc{DSH}~\cite{deepseekharness}, an open-source ``everything is a plugin'' agent harness, we found that a production two-tier system runs on it. Its Cordis kernel~\cite{shi2026composability} proves four structural properties of its plugins, and its architecture has exactly these four properties: permutation-symmetric composition, mount/unmount reversibility, acyclic dependencies, and typed contracts. It was under exactly these properties that we felt the plugin-version instances behave as identical hard spheres. A monolithic self-modifying agent satisfies none of them, and for it the kinetic description does not apply. What distinguishes this gas from the nearest-neighbor populations is the gate. It acts as a Maxwell daemon external to the gas, turning diffusion into optimization and yielding both a convergence certificate and decisive statistical signatures.

The claim is then closed from both sides. On the theory side, the statements are machine-checked theorems: a hitting-time certificate that bounds the expected rounds to a prescribed improvement, a resolution law that prices the held-out validation budget, and a separation theorem that keeps the evaluator outside the population. On the measurement side, production-scale training is too slow for the statistics needed, so all evidence comes from a \emph{faithful minimal instance} of the same loop (dynamical equivalence not claimed), whose pre-registered probes return directional evidence for the predicted traces. The gas-like statistics are real but density-conditional, invisible at sparse edit density and emerging at the predicted rate only under tripled density. Throughout, the fluid picture is read as a bounded analogy. The arc thus closes where it opened: the self-evolution of a single LLM agent, hosting a large population of instances, carries traces of fluid dynamics in its statistics.

\paragraph{Contributions.}
\begin{itemize}
  \item We find traces of fluid dynamics in the self-evolution of an LLM agent, where a large plugin population under a validation gate exhibits density-gated fluctuation scaling and relaxation to a non-equilibrium steady state, as directional evidence.
  \item We give a kinetic theory that predicts these signatures, a master equation over the plugin~$\times$~version~$\times$~task grid in which collision is co-activation and the validation gate acts external to the gas, with a moment hierarchy that generates the falsifiable statistical predictions.
  \item We prove, under the four architectural properties of DSH, a hitting-time certificate for gated self-evolution, a resolution law that prices held-out validation budgets, and a separation theorem that keeps the evaluator outside the population.
\end{itemize}

\section{Related Work}\label{sec:related}

\textbf{Kinetic and physics-flavored theories.} The most closely related work is that of Jung et al.~\cite{jung2026}, whose kinetic theory of collective learning, a Boltzmann-type equation for an adaptive-agent population, shares the population-level viewpoint. We differ in three ways. Our gate supplies external selection, without which drift decouples from the objective; our memory is a shared retrieval pool acting as deliberate re-collision rather than a single-particle filter; and our readings anchor to a production runtime and a measured device rather than to simulation. Mean-field MARL~\cite{mfmar2018} uses the mean field as a computational approximation; in our theory the same closure is the statistical level of description for gated, non-optimizing skill instances. Pareschi--Toscani~\cite{pareschi1999} supplies the mathematical framework for this closure. De Marzo et al.~\cite{demarzo2026majority} show gate-free majority-following laws in LLM agent societies, complementary to our driven-dissipative kinetics. Model collapse~\cite{shumailov2024collapse} is the no-gate limit; physics-flavored accounts analogize single-model optimization, none deriving long-time machinery from a population-level gate structure. Throughout, classical phenomena serve as hypothesis generators and interpretations are labeled.

\textbf{Self-evolving agents.} SkillOpt~\cite{luo2026skillopt} is the discipline we inherit: skill document as trainable state, bounded edits, strict gate, rejected-edit buffer; it defines the \emph{training loop} whose population statistics we derive, and its strict gate is the selection operator our theory formalizes. RE-TRAC~\cite{retrac2026} (memory conditioning) and Resource2Skill~\cite{resource2skill2026} (skill distillation) are orthogonal. Reflexion~\cite{shinn2024reflexion} and the G\"odel Agent~\cite{yin2025godel} update task/meta policies by self-critique without external statistical control; our certificate is what that line lacks. The companion two-tier runtime (33 plugins, built on DeepSeek Harness~\cite{deepseekharness}) is the concrete embodiment.

\textbf{Statistical gating.} SGM~\cite{sgm2025statistical} recasts recursive self-modification as gambling with e-wealth supermartingales; SEA~\cite{sengupta2026sea} carries e-values to LLM self-evolution. Both stand on the anytime-valid statistics lineage of sequential tests~\cite{wald1945}, time-uniform confidence sequences~\cite{howard2021time}, and game-theoretic e-processes~\cite{ramdas2023game}. That line answers when to stop safely, given paired differences. Two open questions are central here: (i) \emph{power}, namely the resolution limit $\Delta_{\min}(n,\hat\sigma)$ of a finite held-out set, which is what actually binds late-training improvement (our Proposition~D companion); and (ii) \emph{multi-candidate selection bias}, since planners proposing 5--8 pipelines per round break the single-candidate assumption, and the Vovk--Wang merge~\cite{vovk2021e} must be invoked (made explicit in our gate). We claim no priority in the statistics of gating; our claim is the gate's role \emph{as an external selection operator} in a population-level kinetic theory, which transforms gate design from a testing procedure into a structural axiom.

\section{Overview}\label{sec:overview}

\begin{figure}[t]
\centering
\includegraphics[width=0.85\textwidth]{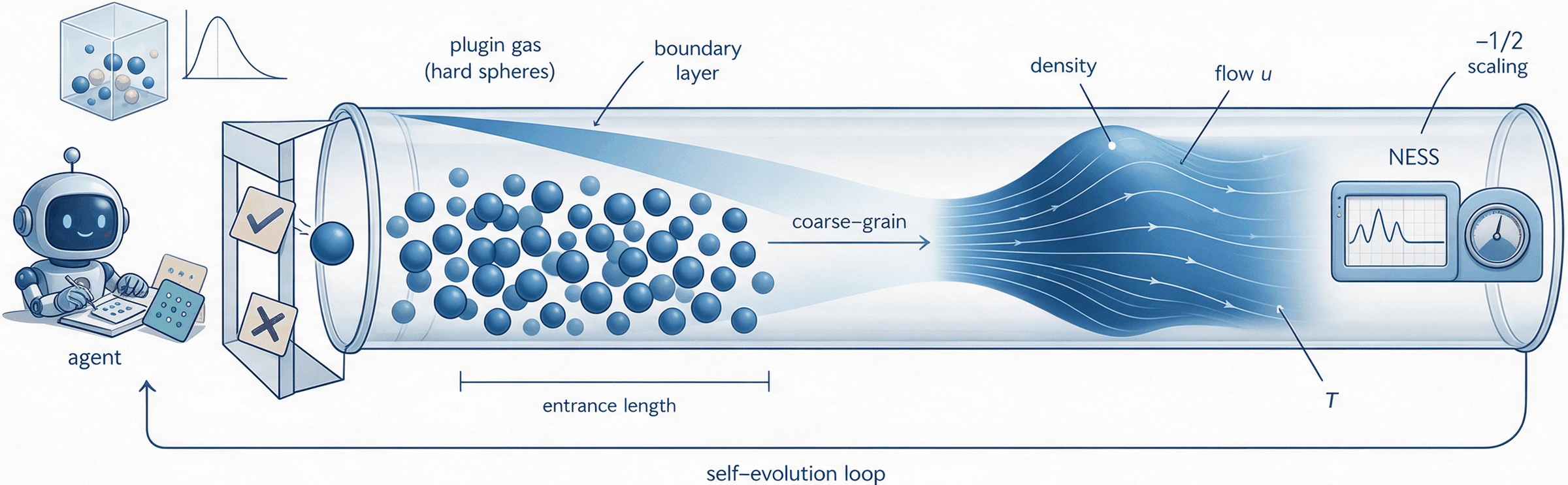}
\caption{Graphical overview of the proposed kinetic theory: the self-evolution run of a gated LLM agent, interpreted through the framework of fluid mechanics.}

\label{fig:overview}
\end{figure}

We develop a statistical theory for the self-evolution of a single LLM agent. The agent's registry maintains a large population of plugin-version instances under a validation gate, and we show that this population behaves, measurably, like a gas worked on by an external selector; the kinetic theory of such a gas predicts what the loop's statistics show. Figure~\ref{fig:overview} depicts this population. The \emph{agent} edits plugin cards; the validation gate, acting as a Maxwell daemon outside the gas, admits improving cards into the \emph{plugin gas} of hard spheres; past the \emph{coarse-grain} transition the gas reads as a continuum with density $\rho(x,v,e,t)$, flow $u$, and temperature $T$; the instruments at the right end measure the predicted traces, $-1/2$ fluctuation scaling and \emph{NESS} relaxation, as directional evidence. Every fluid element is axiom-licensed, and the fluid picture throughout is a bounded analogy.

\section{Theory}\label{sec:theory}

\subsection{State space and the kinetic equation}\label{sec:ktp-eq}

The particle picture of Figure~\ref{fig:overview} is grounded in architecture. It came out of our study of the DSH plugin system~\cite{deepseekharness}: its Cordis kernel~\cite{shi2026composability} proves four structural properties of the plugins. It struck us that under exactly these properties the plugin-version instances behave as identical hard spheres. The four properties are theorems of the kernel, not developer conventions, and we adopt them as the axioms A1--A4 of the kinetic description (the precise Cordis Def./Thm. correspondences are tabulated in Appendix~\ref{app:labels}):

\begin{itemize}\itemsep1pt
\item[A1] \emph{Permutation symmetry:} plugin composition is order-independent, so identical particles are interchangeable.
\item[A2] \emph{Mount/unmount reversibility:} plugins attach and detach without residue, the microscopic reversibility of the population.
\item[A3] \emph{Acyclic dependencies:} no instance instantaneously influences itself; interactions follow a single-time-slice causal order.
\item[A4] \emph{Typed contracts:} interactions are bounded and typed, the short-range contact of the gas.
\end{itemize}

A monolithic self-modifying agent satisfies none of A1--A4, and for it the kinetic description is undefined, not merely inaccurate. The everyday alternatives fail as well. A shared skill document violates A1--A3 and offers no interface types; an MCP stack keeps A4 but carries untracked side effects and cyclical calls. Neither provides a versioned instance multiset with carried credit state, so there are no particles to count; the ledger and the formal obstruction are given in Table~\ref{tab:why-dsh} and Proposition~\ref{prop:obstruct}. Given A1--A4, a plugin population is a multiset of version instances $(\xi, v, e)$, with plugin identity $\xi$, parameter vector $v$, and credit record $e$, living on the discrete grid of plugin $\times$ version $\times$ task. We construct the \emph{KTP master equation}: the density $f(x,v,e,t)$ over capability space $x$, parameter momentum $v$, credit $e$, and time $t$ obeys
\begin{equation}\label{eq:ktp}
\partial_t f + \upd{v}\cdot\nabla_x f + \nabla_{\upd{v}}\cdot(F\, f) \;=\; Q[f,f] + R[f] + G[f],
\end{equation}
It is a five-term balance. The flow term is execution transport, deterministic and bit-reproducible (Fig.~\ref{fig:hardsphere}). The external field $F$ is the task distribution pulling the population, with the LLM inference chain acting as its thermostat. The collision term $Q$ is co-activation with attribution, the reaction term $R$ is mount/unmount/materialize, and the gate term $G$ is selective dissipation. A full term-by-term description with production-code locations appears in Appendix~\ref{app:ops}.

\begin{figure}[!t]
\centering
\includegraphics[width=0.62\textwidth]{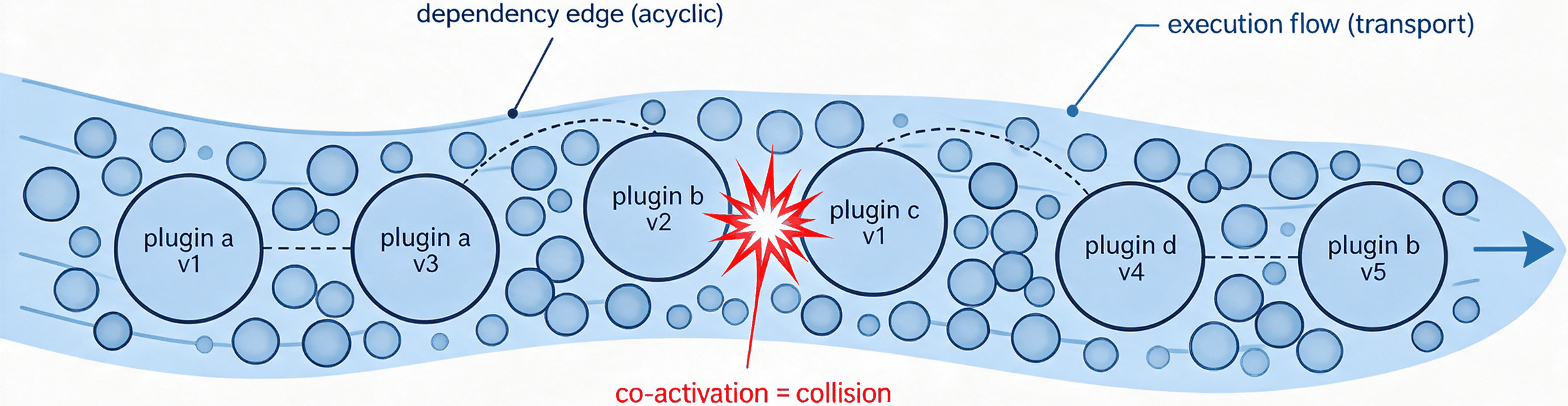}
\caption{Plugin particles carried by the execution flow: transport is reversible execution ordering, not physical inertia; co-activation in one pipeline is the collision, marked in red; dashed edges are the acyclic dependency skeleton; the collision rule is sampled by the LLM reflector.}
\label{fig:hardsphere}
\end{figure}

\subsection{The four operators and the effective population}\label{sec:operators}

\begin{figure}[H]
\centering
\includegraphics[width=0.55\textwidth]{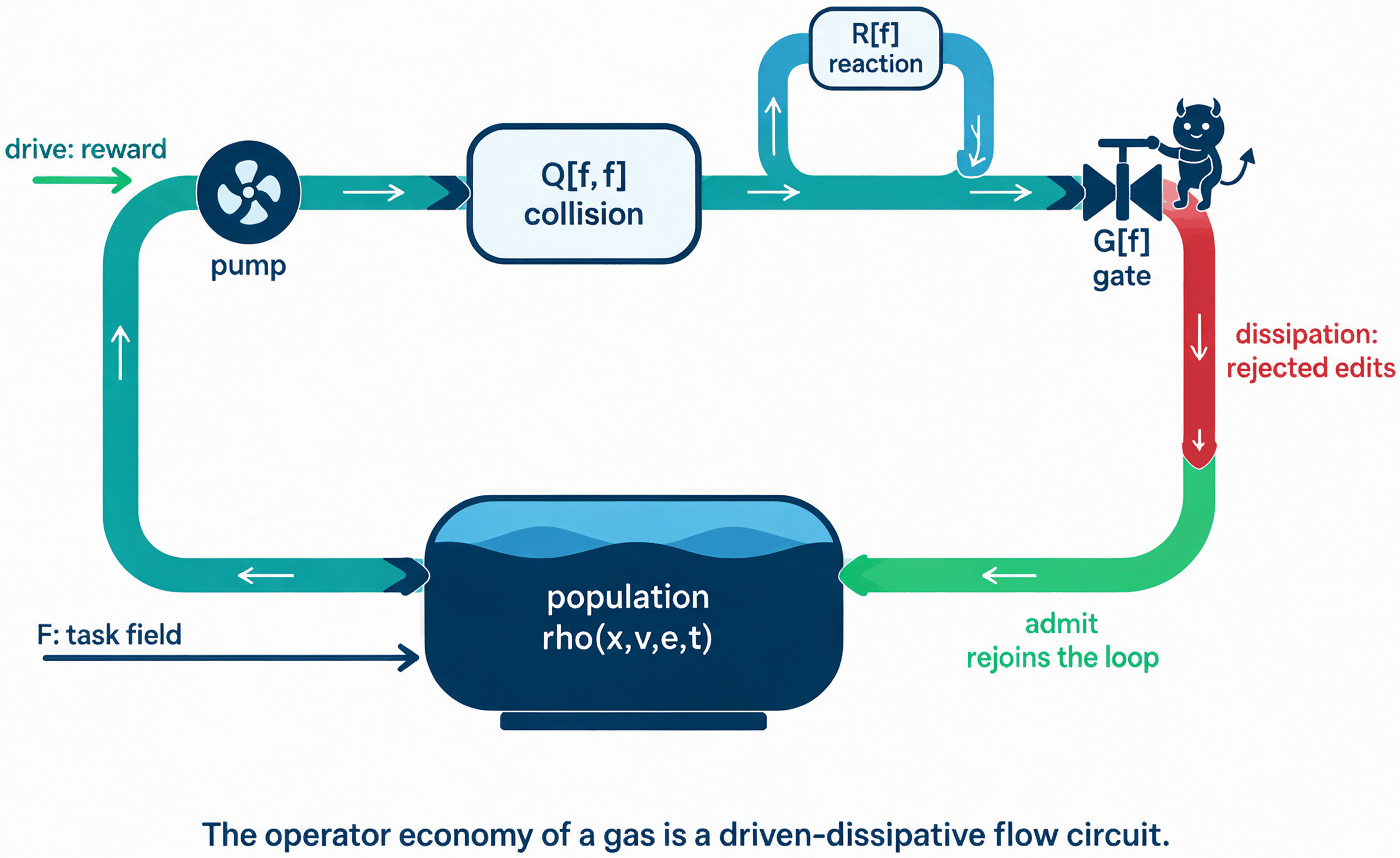}
\caption{The driven-dissipative circuit of the operators. Reward drives the pump into the collision chamber $Q[f,f]$ (fed by the shared experience database), the reaction loop $R[f]$ mutates species, the gate valve $G[f]$ admits improving candidates and drains rejected edits, and the population tank $\rho(x,v,e,t)$ rejoins the loop; $F$ is the external task field. Drive--dissipation balance at the valve is the NESS clause (Section~\ref{sec:ness}).}
\label{fig:circuit}
\end{figure}

The four rules are easiest to read as one particle's life history: it meets others, passes a survival test, lives among changing species, and feels the pull of the task field.
\emph{Collision $Q$}: instances $i,j$ meet when co-activated in a pipeline, with rate kernel $K(x_i,x_j;\tau)$ and cross-section $\sigma_i = \int K(x_i,x';\tau)\,dx'\,d\tau$; the cross-section is bounded structurally by retrieval locality, so an instance collides only within its tag neighborhood. A collision returns batch reward $R_{\mathrm{batch}}$, and the reflector's attribution gives the post-state
\begin{equation}
\upd{e_i'} + \upd{e_j'} = e_i + e_j + R_{\mathrm{batch}},\qquad
\upd{v_i'} = v_i + \eta\, a_i,\qquad
\upd{s_i'} = s_i + \Delta s_i,
\end{equation}
an inelastic, driven-gas event in which the reward is external work and the LLM is the sampler of the collision rule.

\emph{Gate $G$}: every candidate post-state $z'$ is retained iff
\begin{equation}
g(z') = \mathbb{1}\big\{ \obs{V(z')} > \obs{V(z)} \big\},
\end{equation}
with $V$ the held-out score. $G$ is the \emph{only term that projects downhill motion}: all directionality of optimization comes from this projection, and its measurement information decides which particles lose energy; this is the Maxwell daemon outside the gas (Theorem~\ref{thm:demon}; Fig.~\ref{fig:gate}).

\emph{Reaction $R$} (append/mount/unmount/materialize) changes species counts; sandbox trial-mounting is the gas phase and materialization is solidification (Appendix~\ref{app:ops}).

\emph{External field $F$}: the task distribution pulls the population across tasks, and the LLM inference chain acts as its thermostat. The field does no bookkeeping of its own, entering the master equation only through the flux $\nabla_{\upd{v}}\cdot(F\, f)$. Together with $Q$, $R$, and $G$, it closes the driven-dissipative loop (Figure~\ref{fig:circuit}): reward drives collisions, reactions mutate species, the gate dissipates rejected candidates, and admitted post-states rejoin the population; drive--dissipation balance at the valve is the NESS clause (Section~\ref{sec:ness}), and a worked round with recorded numbers appears in Appendix~\ref{app:records}.

The theorems that follow also need the effective particle count $N_{\mathrm{eff}}$, frozen before fitting by a pre-registered gauge (the occupied grid cells, with gauge agreement checked rather than assumed; Appendix~\ref{app:ops}). The rarefaction condition is that no instance dominates, $\max_i \sigma_i / \sum_j \sigma_j \ll 1$; this is the bounded-kernel premise of Theorem~\ref{thm:meanfield}, provided structurally by retrieval locality and probed on the device (a finite-run diagnostic of the premise, not a proof).

\subsection{Dynamics and statistics of the gated gas}\label{sec:results}
\begin{theorem}\label{thm:hitting}
Assume (H1$'$) the reflect kernel proposes a strictly improving edit with probability $\ge p_{\min}$ and gain $\ge \Delta$ whenever such an edit exists; (H2) the gate is a strict held-out inequality on an independent validation set; (H3$'$) the deployed skill is always the latest accepted version and candidates are physically quarantined. Then for the first bounded-improvement threshold $\varepsilon$
\begin{equation}
\mathbb{E}[T_\varepsilon] \;\le\; \frac{V_0}{p_{\min}\,\Delta},
\end{equation}
where $V_0$ is the initial validation gap (score cap minus deployed score), and the best-observed score is monotone along the accepted lineage; the bound is robust to a change of state space (plugin granularity).
\end{theorem}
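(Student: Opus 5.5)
The natural route is a drift (additive-progress) argument on the validation gap, in the style of Wald's identity or a standard hitting-time bound for a nonnegative supermartingale with uniformly negative drift. Let $D_t = V_{\max} - V(z_t)$ be the gap of the deployed skill after round $t$, so $D_0 = V_0$. By (H3$'$) the deployed skill is always the latest accepted version and candidates never leak into deployment. Hence $z_t$ changes only through gate acceptances. By (H2) every acceptance satisfies $V(z_{t+1}) > V(z_t)$, so $D_t$ is nonincreasing along the accepted lineage. This already gives the monotonicity clause, because the best-observed score equals the deployed score, and the deployed score only increases.

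The second step is the drift bound. Let $T_\varepsilon$ be the first round at which the improvement threshold is reached. Before $T_\varepsilon$, I would argue that a strictly improving edit exists; otherwise the threshold is vacuous or already met, and this point should be fixed in the definition of $\varepsilon$ as the first attainable bounded-improvement level. Then (H1$'$) gives $\mathbb{E}[D_t - D_{t+1}\mid\mathcal F_t] \ge p_{\min}\Delta$ on $\{t<T_\varepsilon\}$. This requires the gate to actually accept an edit whose true gain is $\ge\Delta$. Under (H2), with an independent held-out set and scores measured on that set, the strict inequality admits it. Rejected edits contribute $0$ and never a negative term, by (H3$'$) quarantine.

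The third step combines the two. Define $M_t = D_{t\wedge T_\varepsilon} + p_{\min}\Delta\,(t\wedge T_\varepsilon)$. This is a supermartingale, so $p_{\min}\Delta\,\mathbb{E}[t\wedge T_\varepsilon] \le D_0 - \mathbb{E}[D_{t\wedge T_\varepsilon}] \le V_0$, using $D\ge 0$ (score cap). Monotone convergence as $t\to\infty$ then gives $\mathbb{E}[T_\varepsilon]\le V_0/(p_{\min}\Delta)$, and incidentally shows $T_\varepsilon<\infty$ a.s.

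For robustness to plugin granularity, note that the argument uses only the scalar process $D_t$ together with the constants $p_{\min}$, $\Delta$, and $V_0$. Any refinement or coarsening of the state space that preserves the deployed score and (H1$'$)--(H3$'$) leaves every inequality unchanged. This is an invariance statement about the Lyapunov function, not about the kinetic state.

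I expect the main obstacle to be the second step. It needs a justification that the \emph{observed} held-out gain and the \emph{true} gap decrement coincide well enough that the gate is not fooled into accepting spurious improvements or rejecting real ones. I would first try to work entirely with $V$ as the held-out score itself, so the cap and the gap are defined on the validation set and the issue disappears. I would defer the true-score version to the resolution law.
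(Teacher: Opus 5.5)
Your proposal is correct and follows the paper's own proof. Both run a drift argument on the validation gap $V^{\max}-V^*_t$ with per-round expected decrement at least $p_{\min}\Delta$ before $T_\varepsilon$ (you, like the paper, take the existence of an improving edit while the gap is open as part of the hypotheses), derive monotonicity from H2--H3$'$, and get robustness because only the scalar drift floor enters. Your stopped supermartingale $D_{t\wedge T_\varepsilon}+p_{\min}\Delta\,(t\wedge T_\varepsilon)$ with monotone convergence is in fact the careful version of the paper's one-line appeal to optional stopping via bounded increments, which as stated tacitly presupposes $\mathbb{E}[T_\varepsilon]<\infty$.
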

The proof strategy is direct: along the accepted lineage, the best score is a nonnegative quantity that increases by at least $p_{\min}\Delta$ per accepted round on average, and a standard drift argument turns this per-round drift into the expected hitting bound (full derivation: Appendix~\ref{app:lean}). Operationally, $p_{\min}\Delta$ has a measurable counterpart in the run log---the acceptance rate times the mean accepted gain---so the certificate is read post hoc as a consistency check on the realized drift. By analogy, the bound plays the role of a relaxation timescale of the population flow: it explains the measured saturation---late in training the binding constraint is the resolution floor, not the skill (Section~\ref{sec:experiments}).

\begin{figure}[!t]
\centering
\includegraphics[width=0.42\textwidth]{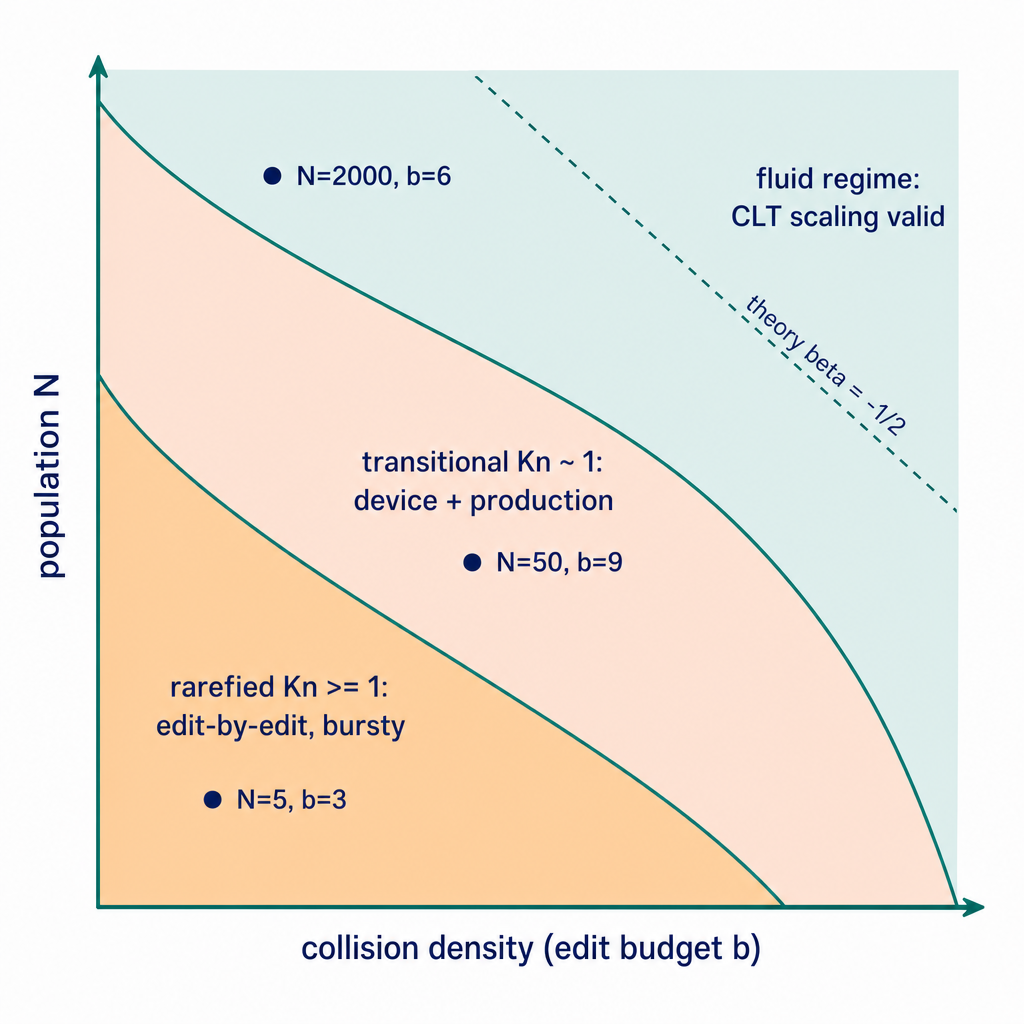}
\caption{The regime map. Collision density (edit budget) against population $N$; the device points mark where our measurements sit. The CLT signature $\beta=-1/2$ is a property of the fluid corner; the rarefied corner is edit-by-edit and bursty; the device operates in the transitional band, which is why signatures are density-gated (\S\ref{sec:experiments}). The placement of the boundaries on the $b{\to}\Kn$ axes is heuristic.}
\label{fig:regime-map}
\end{figure}

\begin{theorem}\label{thm:meanfield}
A plugin Markov-jump population with i.i.d.\ initial conditions and a bounded collision kernel (no dominating instance) converges, under propagation of chaos in the McKean--Vlasov sense, to a deterministic nonlinear kinetic equation of the Boltzmann type with the same four operators.
\end{theorem}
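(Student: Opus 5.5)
The plan is to follow the classical route for mean-field limits of exchangeable Markov jump systems (Kac, Sznitman, Graham--M\'el\'eard). I would fix a population of $N$ instances $Z^N_i(t)=(x_i,v_i,e_i)$ and write the generator $\mathcal{L}_N$ of the jump process as a sum of four pieces. The pair-collision piece rescales the rate kernel as $K/N$. The single-particle reaction piece corresponds to $R$. The gate piece acts as a killing/retention jump with indicator $g(z')$. The field piece contributes $F\cdot\nabla_v$ and the transport term $v\cdot\nabla_x$. Axiom A1 makes the law of $(Z^N_1,\dots,Z^N_N)$ exchangeable at all times, given i.i.d.\ initial data. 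The object of study is then the empirical measure $\mu^N_t=\frac1N\sum_i\delta_{Z^N_i(t)}$.

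\textbf{Steps.} (i) \emph{Martingale formulation.} For each bounded test function $\varphi$, write
\[
\langle \mu^N_t,\varphi\rangle=\langle\mu^N_0,\varphi\rangle+\int_0^t \mathcal{A}[\mu^N_s]\varphi\,ds+M^N_t(\varphi),
\]
where $\mathcal{A}[\mu]\varphi$ is the weak form of the right-hand side of \eqref{eq:ktp}. The collision part is quadratic in $\mu$ up to an $O(1/N)$ diagonal correction.

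(ii) \emph{Vanishing noise.} Bound the quadratic variation $\mathbb{E}\langle M^N(\varphi)\rangle_t\le C t\|\varphi\|_\infty^2/N$. The constant $C$ depends only on $\sup K$, the reaction and gate rates, and $\|F\|$. The bounded-kernel and rarefaction premise ($\max_i\sigma_i/\sum_j\sigma_j\ll1$) is exactly what makes the total jump rate per particle $O(1)$.

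(iii) \emph{Tightness.} Apply the Aldous criterion to the laws of $\mu^N$ on $D([0,T],\mathcal{P}(E))$. This uses the bounded rates together with compact containment of $(x,v,e)$. I would obtain the latter from a moment bound: the credit increment is bounded by $R_{\mathrm{batch}}$, and the velocity increment $\eta a_i$ is bounded by A4.

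(iv) \emph{Identification and uniqueness.} Every limit point solves the weak nonlinear equation $\partial_t\langle\mu_t,\varphi\rangle=\mathcal{A}[\mu_t]\varphi$, that is, the Boltzmann-type equation with operators $Q,R,G,F$. Uniqueness would come from a Gr\"onwall estimate in total variation, which works because all jump operators are bounded and Lipschitz in $\mu$. Tightness plus uniqueness gives $\mu^N\to f$ in law to a deterministic limit. By Sznitman's equivalence, this is propagation of chaos: $k$-marginals converge to $f^{\otimes k}$.

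\textbf{Main obstacle.} I expect the gate $G$ to be the hardest step. The indicator $\mathbb{1}\{V(z')>V(z)\}$ is discontinuous, and $V$ is a held-out score that may depend on the whole deployed configuration rather than on one particle. Two things need to hold for the limit: the gate must be a functional of $(z,z')$ and $\mu$ that is measurable and Lipschitz in $\mu$, and the validation noise must be independent of the population, as (H2)/(H3$'$) of Theorem~\ref{thm:hitting} suggest. Under those conditions $G$ is a bounded killing operator and the argument goes through. My first attempt would treat $g$ as a randomized retention probability $\mathbb{P}(V(z')>V(z))$, averaged over the independent validation draw. This smooths the discontinuity and keeps the generator bounded.

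A secondary difficulty is the transport and field terms, which are unbounded differential operators. I would handle them by working along characteristics, assuming $F$ is Lipschitz. An alternative is to restrict to the discrete plugin $\times$ version $\times$ task grid, where the whole process is a finite-state jump process and the argument reduces to Kurtz's theorem for density-dependent chains.
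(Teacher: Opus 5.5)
Your route is the paper's route. The paper's own argument is only ``the standard result transposed'': the mean-field theorem of Sznitman and Kac, with Boltzmann--Grad set aside because retrieval engineers recollisions. It adds applicability checks: exchangeability from A1, a bounded kernel from retrieval locality (with the measured non-dominance ratio $0.067$), and the credit record $e$ as the reason the one-instance state $(x,v,e)$ is treated as Markov. Your martingale/tightness/uniqueness outline is an explicit version of what it cites. Your conditions on the gate are exactly the hypothesis the paper leaves implicit when it writes $G[f]$ as a one-particle operator: a localized $g(z,z';\mu)$, measurable and Lipschitz in $\mu$, with validation noise independent of the population, in the spirit of Theorem~\ref{thm:demon}. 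That hypothesis carries real weight, and the paper gives no argument for it either. For the device's library-level gate on a fixed 20-episode set, an edit to one of $N$ instances plausibly leaves $V$ unchanged as $N\to\infty$ (tie, hence rejection), so without localization the limiting $G$ would degenerate.

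The step that fails as written is particle-number bookkeeping. The conservation ledger (Appendix~\ref{app:ops}) records that $R$ does \emph{not} conserve $N$, since append and delete are births and deaths, whereas $G$ \emph{does}. Your steps (iii)--(iv) put the laws of $\mu^N$ on $D([0,T],\mathcal{P}(E))$ and conclude via Sznitman's equivalence on $k$-marginals. That presupposes a fixed exchangeable $N$-tuple, so it cannot produce a mass-changing $R$ in the limit. Separately, calling $G$ a killing operator would insert a mass-loss term that the paper's $G$ does not have.

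Treat $G$ instead as a thinning of the edit kernel: the composite jump $z\to z'$ at rate $\lambda(z,dz')\,g$, with $\lambda$ the edit-proposal kernel, where a rejected candidate leaves the particle at $z$. This is bounded automatically, and the indicator's discontinuity then matters only when you pass to the limit in the weak form. There your randomized retention probability, or the discrete grid, suffices.

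For $R$ you have two options.
\begin{itemize}
\item Adjoin a dormant state $\partial$ and start from a reservoir large enough that the dormant mass stays bounded below on $[0,T]$. Append and delete then become jumps $\partial\leftrightarrow E$ with $\mu$-dependent per-capita rates, and Sznitman's equivalence applies verbatim.
\item Work in $\mathcal{M}_F(E)$, control $\mathbb{E}\langle\mu^N_t,1\rangle\le e^{Ct}$ from bounded per-capita birth rates, and use Roelly's tightness criterion. The conclusion is then a law of large numbers for the unnormalized density $f$ (with mass source $S_\rho$), not $k$-marginal chaos.
\end{itemize}
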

The derivation route here is mean-field~\cite{sznitman1991,kac1956}. The Boltzmann--Grad alternative~\cite{lanford1975,deng2024longtime} requires control of recollisions, which the plugin gas introduces by design. The applicability of the resulting closure has a structural basis as well. It is cleaner here for a deeper reason. The plugin state is \emph{enriched with its collision history} through the credit record $e$, and the experience database is precisely this stored collision history, so the correlations a gas loses are partially materialized as state, a heuristic reading; the residual global coupling through the shared database enters the kernel's task-dependence rather than hidden pair correlations. The no-dominating-instance premise is measured on the device, not assumed (Appendix~\ref{app:ops}).

\begin{proposition}\label{prop:plateau}
Under H2 and strict monotone waterline, a plateau implies exactly one of: (a) proposal exhaustion, meaning the acceptance rate tends to zero; or (b) information exhaustion, meaning the mean score gain on the gate's own validation set stays positive, $\delta_{\mathrm{gate}} > 0$, while the mean gain measured on a disjoint task set vanishes, $\delta_{\mathrm{ind}} \approx 0$. Distinguishable from logged data alone.
\end{proposition}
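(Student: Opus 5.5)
We formalize a plateau as the event that the deployed (waterline) score $W_t$ stops increasing asymptotically. Concretely, $\lim_{t\to\infty}\mathbb{E}[W_{t+1}-W_t]=0$. The first step decomposes the per-round increment of the waterline. By H2 and the strict monotone waterline, $W$ changes only on accepted rounds, and then by a strictly positive amount on the gate's validation set. Writing $\alpha_t$ for the acceptance rate and $\delta_{\mathrm{gate},t}$ for the mean gate-set gain conditional on acceptance, we get the identity
\begin{equation*}
\mathbb{E}[W_{t+1}-W_t] \;=\; \alpha_t\,\delta_{\mathrm{gate},t},
\end{equation*}
the same drift decomposition used in the proof of Theorem~\ref{thm:hitting}. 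A plateau forces $\alpha_t\delta_{\mathrm{gate},t}\to 0$. Since accepted gains are strictly positive by construction, there are two routes. Either $\alpha_t\to 0$, which is case (a). Or $\alpha_t$ stays bounded away from zero, in which case we must show the accepted edits are real on the gate set but spurious off it, which is case (b).

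The second step handles the non-exhausted branch, and I expect it to be the main obstacle. Here $\delta_{\mathrm{gate}}$ is not literally forced to vanish; it is the measured waterline gain per acceptance. The argument is that the independent-task gain $\delta_{\mathrm{ind}}$ is an unbiased estimate of the true improvement $\mu_t$ of accepted edits, because the disjoint set is never touched by the gate. Meanwhile, $\delta_{\mathrm{gate}}$ equals $\mu_t$ plus a selection-bias term. That bias is of the order of the resolution floor $\Delta_{\min}(n,\hat\sigma)$: the conditional expectation of a noisy paired difference given that it exceeds zero. A plateau of the true score (the quantity on which $V_0$ and Theorem~\ref{thm:hitting} are defined) gives $\mu_t\to 0$, hence $\delta_{\mathrm{ind}}\approx 0$. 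The selection bias keeps $\delta_{\mathrm{gate}}>0$ whenever $\alpha_t\not\to 0$. I would first try to make this rigorous with a truncated-Gaussian bound on $\mathbb{E}[\hat D\mid \hat D>0]$. Failing that, I would state it for bounded scores via a one-sided Hoeffding argument. The delicate point is defining ``plateau'' on the true rather than the observed score, so that the dichotomy is not vacuous.

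The third step is exclusivity and identifiability. The cases are exclusive because (b) requires a nonvanishing acceptance rate while (a) requires it to tend to zero. They are exhaustive by the first step, where if (a) fails, the second step yields (b). Distinguishability follows because $\alpha_t$, the gate-set gains and the disjoint-set scores are all recorded in the run log, so the triple $(\alpha_t,\delta_{\mathrm{gate}},\delta_{\mathrm{ind}})$ is computable post hoc with no further model queries. The physical quarantine in H3$'$ guarantees that the disjoint set remained untouched by selection.
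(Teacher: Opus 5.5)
Your case split on the acceptance rate, with H2 making accepted gate-set gains positive, is the same skeleton as the paper's argument. That argument is purely definitional: under the strict gate an acceptance \emph{is} an event with positive held-out gain. So persistent acceptances already give $\delta_{\mathrm{gate}}>0$, and gains shrinking to $0^{+}$ still count as $\delta_{\mathrm{gate}}>0$.

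The gap is that your proof uses two incompatible notions of plateau. Step~1 defines the plateau on the observed waterline. In the branch $\alpha_t\ge c>0$, the identity $\alpha_t\delta_{\mathrm{gate},t}\to0$ then \emph{does} force $\delta_{\mathrm{gate},t}\to0$, contrary to your remark in Step~2. Step~2 instead obtains $\mu_t\to0$, and hence $\delta_{\mathrm{ind}}\approx0$, from a plateau of the \emph{true} score. Step~1's hypothesis does not supply that: a vanishing observed increment with nonnegative selection bias gives only $\limsup_t\mu_t\le0$.

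Worse, suppose the bias really is bounded below by a constant $c'>0$ of the order of $\Delta_{\min}(n,\hat\sigma)$, as you assert. Then $\delta_{\mathrm{gate},t}\ge\mu_t+c'$ with $\mu_t\to0$ contradicts $\delta_{\mathrm{gate},t}\to0$. With both hypotheses in force the non-exhausted branch is empty, so Step~3's ``exhaustive by the first step'' is not established by what you wrote. Separately, the negation of $\alpha_t\to0$ is $\limsup_t\alpha_t>0$, not ``bounded away from zero''; you need a subsequence there.

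To repair this, commit to one notion of plateau and drop the extra machinery. Split directly on whether acceptances persist. In the persistent branch, $\delta_{\mathrm{gate}}>0$ follows from H2 alone, with no truncated-Gaussian or Hoeffding bound. That fresh-noise model also misdescribes a frozen validation set whose incumbent score is itself a running maximum. The claim $\delta_{\mathrm{ind}}\approx0$ is the out-of-sample content of the plateau. Your observation that the disjoint set is untouched by selection, and hence unbiased, is the right justification for it, and the paper's remark leaves that half implicit. Disjointness of the independent set is all you need; H3$'$ is not among the proposition's hypotheses.
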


\begin{proposition}\label{prop:fluct}
Under Theorem~\ref{thm:meanfield} and absolutely summable autocorrelation ($\sum_k|\rho_k| < \infty$), the run-to-run standard deviation of the population score satisfies
\begin{equation}
\mathrm{std} \;\approx\; \sqrt{\tfrac{T}{N_{\mathrm{eff}}}\bigl(1 + 2{\textstyle\sum_k} \rho_k\bigr)}.
\label{eq:fluct}
\end{equation}
\end{proposition}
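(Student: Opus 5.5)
The plan is to read the population score as an empirical average over the $N_{\mathrm{eff}}$ effective instances and to use Theorem~\ref{thm:meanfield} to make those instances asymptotically independent. The variance then splits into a single-particle variance, identified with the kinetic temperature $T$, times an integrated-autocorrelation factor; this is the familiar ``variance of a mean of a stationary correlated sequence'' computation. Concretely, write the score at the observation time as $S=\frac{1}{N_{\mathrm{eff}}}\sum_{i=1}^{N_{\mathrm{eff}}}\phi(z_i)$, where $\phi$ is the per-instance score functional and $z_i=(x_i,v_i,e_i)$. Define $T:=\mathrm{Var}_{f}[\phi]$, the second central moment of $\phi$ under the limiting one-particle density $f$. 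This is the moment-hierarchy definition of temperature from Section~\ref{sec:ktp-eq}, and I would state it explicitly as the identification that gives Eq.~\eqref{eq:fluct} its meaning.

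\textbf{Step 1 (chaos gives near-independence).} Propagation of chaos in the McKean--Vlasov sense says that the $k$-particle marginals factorize as $f^{\otimes k}$ up to $O(1/N_{\mathrm{eff}})$ uniformly on the time horizon. Applied with $k=2$, this makes the cross-instance covariances $\mathrm{Cov}(\phi(z_i),\phi(z_j))$ for $i\neq j$ that carry no index structure contribute only $O(1/N_{\mathrm{eff}})$ to $N_{\mathrm{eff}}\mathrm{Var}(S)$. The residual correlation that survives is the one the paper already names: sequential re-collision through the shared database and the retrieval order. I would model it as a stationary correlation $\rho_k=\mathrm{Corr}(\phi(z_i),\phi(z_{i+k}))$ along the ordering induced by retrieval/episode index.

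\textbf{Step 2 (variance of the correlated mean).} Expand
\[
\mathrm{Var}(S)=\frac{T}{N_{\mathrm{eff}}}\Bigl(1+\frac{2}{N_{\mathrm{eff}}}\sum_{k=1}^{N_{\mathrm{eff}}-1}(N_{\mathrm{eff}}-k)\rho_k\Bigr).
\]
Absolute summability $\sum_k|\rho_k|<\infty$ lets dominated convergence send the Fej\'er-weighted sum to $\sum_k\rho_k$. It also shows the remainder is $o(1/N_{\mathrm{eff}})$. Taking the square root yields Eq.~\eqref{eq:fluct}, and the $\approx$ absorbs both this remainder and the $O(1/N_{\mathrm{eff}})$ chaos error. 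The run-to-run reading follows because independent runs are independent draws of $S$ from the same i.i.d.\ initial law, so their sample standard deviation estimates $\sqrt{\mathrm{Var}(S)}$.

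\textbf{Main obstacle.} The hard step is Step 1. Propagation of chaos is a statement about fixed-$k$ marginals, whereas I need the sum of $N_{\mathrm{eff}}^2$ pairwise covariances to stay $O(N_{\mathrm{eff}})$. The gate $G$ makes this harder: it is a global, score-dependent selection that couples all instances. I would first try to get a quantitative chaos rate for bounded kernels, of Sznitman coupling type with pairwise covariance $O(1/N)$. That rate gives total cross-covariance $O(N)$, which is of the same order as the diagonal term, so it would not be negligible by itself. I would therefore need to argue that this part is exactly what the index-ordered $\rho_k$ encode, rather than treating it as a separate error. Failing that, I would state the result conditionally, as the leading-order CLT under a mixing assumption on the retrieval sequence, and flag the gate's global coupling as entering only through $F$ and $T$.
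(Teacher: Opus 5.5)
The gap is the one you flagged yourself, and it comes from putting the autocorrelation on the wrong axis. In the paper's derivation the two factors of Eq.~\eqref{eq:fluct} live on different indices. Theorem~\ref{thm:meanfield} is used only to treat the population as $N_{\mathrm{eff}}$ exchangeable, effectively independent contribution streams; this is the sole source of $1/N_{\mathrm{eff}}$. The $\rho_k$ are lag-$k$ autocorrelations \emph{in rounds} of each stream's score. The factor $T(1+2\sum_k\rho_k)$ is the Green--Kubo long-run variance of the score summed over $T$ rounds, times a per-round variance $\sigma^2$ that the display suppresses. The Lean-verified geometric case, $\rho_k=r^k$ giving $(1+r)/(1-r)$, is exactly the AR(1) value of this factor.

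Your $\rho_k$ is instead a lag correlation along an ordering of instances, and under the hypotheses there is no such lag structure. I.i.d.\ initial data and permutation-symmetric dynamics (A1) make $(z_1,\dots,z_{N_{\mathrm{eff}}})$ exchangeable, so every pair $i\neq j$ has the same correlation $c_N$. Relabelling by retrieval order does not help: $S$ is symmetric, so $\mathrm{Var}(S)=\tfrac{T}{N_{\mathrm{eff}}}\bigl(1+(N_{\mathrm{eff}}-1)c_N\bigr)$ in any labelling. This leaves two cases, and both fail.
\begin{itemize}
\item If $\rho_k$ is a fixed summable sequence, exchangeability forces $\rho_k\equiv 0$, and the correction factor is vacuous.
\item If $c_N=\kappa/N_{\mathrm{eff}}$ (the Sznitman rate you cite), your Step~2 fails. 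The profile $\rho_k=c_N$ for all $k<N_{\mathrm{eff}}$ has no majorant independent of $N_{\mathrm{eff}}$. The Fej\'er-weighted sum tends to $\kappa$ while $2\sum_k\rho_k$ tends to $2\kappa$, so the dominated-convergence passage is off by a factor of two in the correction.
\end{itemize}
So the repair in your last paragraph cannot be carried out: the $O(N_{\mathrm{eff}})$ aggregate cross-covariance cannot be absorbed into index-ordered $\rho_k$. Step~1's claim that off-diagonal terms contribute $O(1/N_{\mathrm{eff}})$ to $N_{\mathrm{eff}}\mathrm{Var}(S)$ is false at that rate, as you observed.

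Your Step~2 is the right computation; apply it to the time index instead. Let instance $i$ contribute $X_{i,1},\dots,X_{i,T}$ over rounds, stationary with variance $\sigma^2$ and autocorrelation $\rho_k$. Now $\rho_k$ is a fixed sequence not depending on $T$, so dominated convergence legitimately gives $\mathrm{Var}\bigl(\sum_{t\le T}X_{i,t}\bigr)=T\sigma^2\bigl(1+2\sum_k\rho_k\bigr)+o(T)$. Averaging $N_{\mathrm{eff}}$ streams treated as independent then divides by $N_{\mathrm{eff}}$.

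The paper does not control the $O(1/N_{\mathrm{eff}})$ pairwise covariances either. It posits effective independence and lets $\approx$ absorb them, which changes the constant but not the exponent $\beta=-1/2$ that is actually tested. Note also that your definition $T:=\mathrm{Var}_f[\phi]$ is not the paper's moment temperature: $\tfrac32\rho T=\int\tfrac12|v-u|^2f\,dv\,de$ measures the spread of $v$, not of the score. In the paper's route $T$ counts rounds, and the score variance enters separately as $\sigma^2$.
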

Prediction: $\beta = -1/2$ on log-log; a contrary $\beta$ indicates rarefaction or over-correlation (Fig.~\ref{fig:regime-map}); the shared database can violate summability.

\begin{theorem}\label{thm:demon}
Within the certificate's assumptions, separating the frozen validation set from the evolving set is \emph{necessary}: merging them voids the drift hypothesis, hence the bound. The Maxwell daemon operates outside the gas. \emph{Scope}: non-separated designs may still converge, but they simply exit the certificate's coverage.
\end{theorem}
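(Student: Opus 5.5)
The plan is to argue by exhibiting exactly which hypothesis of Theorem~\ref{thm:hitting} fails under merging, and then to show that this failure breaks the drift inequality itself, not merely the proof. The certificate rests on three ingredients: (H1$'$) a per-round proposal probability $p_{\min}$ with gain $\ge\Delta$, (H2) a strict inequality on an \emph{independent} validation set, and the drift step. That step asserts that the gap $V_t$ (cap minus deployed score, now read as true held-out performance) decreases in expectation by at least $p_{\min}\Delta$ per round. I would first restate this drift as a conditional expectation, $\mathbb{E}[V_{t+1}-V_t\mid\mathcal{F}_t]\le -p_{\min}\Delta$ on $\{V_t>\varepsilon\}$. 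I would also note that the optional-stopping argument behind $\mathbb{E}[T_\varepsilon]\le V_0/(p_{\min}\Delta)$ uses only this inequality and $V_t\ge 0$. Everything therefore reduces to showing that merging destroys this inequality.

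Second, I would model the merged design, in which the gate score $V$ is computed on data that the reflect kernel has also conditioned on when producing $z'$. Then $g(z')=\mathbb{1}\{V(z')>V(z)\}$ is no longer an unbiased comparison of true performance. Acceptance is selected on the same noise that generated the proposal, so $\mathbb{E}[\text{true gain}\mid\text{accepted}]$ can be zero or negative while the observed gain stays positive. This is the adaptive-overfitting, winner's-curse effect, and it is exactly regime (b) of Proposition~\ref{prop:plateau}: $\delta_{\mathrm{gate}}>0$ together with $\delta_{\mathrm{ind}}\approx 0$. To show necessity concretely, I would build a counterexample within the other assumptions. The population has a true score that is constant, so $V_0>0$ and no real improvement exists. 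The kernel is allowed to tune parameters $v$ to idiosyncrasies of the merged set. Then H1$'$ holds in observed terms, with probability $\ge p_{\min}$ of an observed gain $\ge\Delta$, the accepted lineage is monotone in observed score, and H3$'$ holds. Yet the true gap never decreases, so $T_\varepsilon=\infty$ almost surely, and no finite bound of the form $V_0/(p_{\min}\Delta)$ can hold. This shows that the hypothesis set with (H2) replaced by a merged gate does not imply the conclusion, which is the precise sense of ``voids the drift hypothesis, hence the bound.''

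Third, I would record the scope clause. The counterexample only shows non-implication, so it says nothing about specific merged designs, which may still converge; they simply fall outside the certificate's coverage. The Maxwell-daemon phrasing then follows as an interpretation: the measuring apparatus (the validation set) must not be part of the system it sorts.

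I expect the main obstacle to be step two. Merging has to be formalized so that H1$'$ and H3$'$ remain satisfiable while only independence fails; otherwise the counterexample could be dismissed as violating another hypothesis. My first attempt would be a finite validation set of size $n$ with Bernoulli outcomes and a kernel that memorizes per-item responses. I would then check that $p_{\min}$ stays bounded below until the set is fully overfit, which may require letting $n$ grow or resampling the noise each round.
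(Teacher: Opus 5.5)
\textbf{The gap.} Your merged model does not void the drift the certificate actually uses. Theorem~\ref{thm:hitting} defines $V_0$, $T_\varepsilon$ and the drift on the \emph{validation} gap $G_t=V^{\max}-V^{*}_t$, that is, on the gate's own score. In your model that score is still a fixed function of the candidate; merging only changes what the reflector knows. You read H1$'$ on that score, and together with a strict gate and H3$'$ this gives $\mathbb{E}[G_t-G_{t+1}\mid\mathcal{F}_t]\ge p_{\min}\Delta$ exactly as in the appendix proof. Independence is never invoked there. Your memorizing kernel even drives the observed gap to $\varepsilon$ in finite time. So the bound survives in your model.

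The failure you exhibit is for a redefined, out-of-sample gap. That is not the quantity the certificate bounds, and separation does not protect it either. Selecting repeatedly on a frozen held-out set produces the same winner's-curse inflation. The paper attributes $\delta_{\mathrm{gate}}>0$, $\delta_{\mathrm{ind}}\approx 0$ to a \emph{fixed} validation set (Section~\ref{sec:ness}, probe P4), not to merging. Your identification of the merged failure with regime (b) of Proposition~\ref{prop:plateau} therefore shows that your mechanism is not specific to merging. As a result, the counterexample does not single out separation as the hypothesis that cannot be dropped.

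\textbf{The missing idea} is what merging does to the \emph{yardstick}. In the paper, the measured set overlaps and evolves with the population, so the daemon becomes part of the gas and the score is $V_t$, not $V$. The step of the drift proof that breaks is ``a rejected proposal changes nothing.'' The deployed version's score can now fall with no acceptance at all. Then $V^{*}_t$ is no longer monotone, and $G_t$ no longer has drift $-p_{\min}\Delta$. This is the monotonicity failure the paper cites from the reservoir's Theorem~7.

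With this reading, a counterexample in the certificate's own terms is immediate. Take accepted gains exactly $\Delta<V^{\max}/2-\varepsilon$. Re-draw the measured set each round so that the incumbent scores $V^{\max}/2$. H1$'$ (relative to the current yardstick) and H3$'$ both hold. Yet $G_t>\varepsilon$ forever, so $T_\varepsilon=\infty$ against the finite bound $V_0/(p_{\min}\Delta)$. Your scope remark and the daemon interpretation then carry over unchanged.
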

The argument runs on the drift premise: the gate certifies strictly improving post-states only on the set it measures; if the measured set is itself evolving, acceptance is self-referential and the per-round drift behind the certificate cannot be established. The formalized reservoir proves this necessity direction explicitly: non-separability voids the certificate premises (Appendix~\ref{app:lean}).

\begin{figure}[H]
\centering
\includegraphics[width=0.50\textwidth]{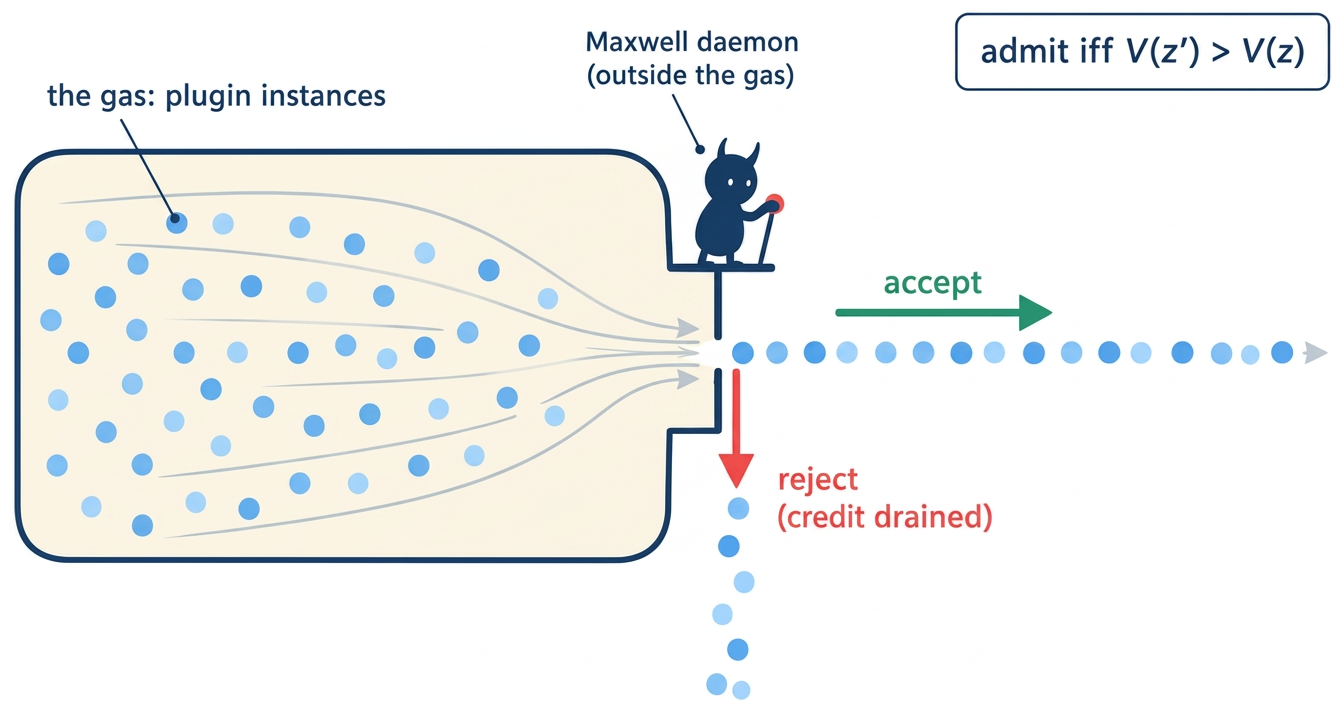}
\caption{The gate as a Maxwell-daemon valve outside the gas: candidates stream past the slit and are measured on the held-out set; strictly improving post-states are admitted, the rest are rejected with credit drained. The certificate requires the daemon to sit outside the flow (Theorem~\ref{thm:demon}).}
\label{fig:gate}
\end{figure}

\subsection{Thermodynamics and transport}\label{sec:ness}

The collision rule sampled by the LLM is asymmetric and carries no thermostat of its own: without the gate, the population does not thermalize but drifts to the LLM's \emph{prior modality}, the kinetic form of model collapse. All training signal is therefore supplied by the gate, and the effective learning rate separates into
\begin{equation}
\eta_{\mathrm{eff}} \;=\; f_{\mathrm{edit}}\; I_{G},
\qquad
I_{G} \;=\; I\big(A;\,\Delta V^{*}\big) \;=\; \mathbb{E}\, D_{\mathrm{KL}}\!\big(P_{\Delta V^{*}\mid A}\,\big\|\,P_{\Delta V^{*}}\big),
\label{eq:learn}
\end{equation}
with $f_{\mathrm{edit}}$ the rate at which the reflector proposes edits, $A$ the gate's accept--reject adjudication, and $\Delta V^{*}$ the score gain measured on a task set disjoint from the evolving population. $I_{G}$ is the mutual information between the adjudication and the gain: the expectation, taken over adjudications, of the KL divergence from the gain distribution conditioned on the adjudication, $P_{\Delta V^{*}\mid A}$, to the unconditional distribution $P_{\Delta V^{*}}$. Edits alone move the population; the gate decides whether the motion counts, and $I_{G}$ vanishes exactly in the information-exhaustion channel of Proposition~\ref{prop:plateau}.

With the gate active, reward injection (drive) and credit drainage (dissipation) stay balanced, and stationarity of Eq.~\eqref{eq:ktp} pairs transport against interactions,
\begin{equation}
\underbrace{\;\upd{v}\cdot\nabla_x f \;+\; \nabla_{\upd{v}}\cdot(F\,f)\;}_{\text{flow, field}}
\;=\;
\underbrace{\;Q[f,f] \;+\; R[f] \;+\; G[f]\;}_{\text{collisions, reactions, gate}},
\label{eq:ness}
\end{equation}
a non-equilibrium steady state (NESS): convergence is relaxation to it, a plateau is vanishing entropy production, and escaping a plateau requires changing parameters (temperature, task field, gate strictness), not applying force. A fixed validation set makes the daemon efficiency shrink, $\delta_{\mathrm{gate}}>0$ while $\delta_{\mathrm{ind}}\approx 0$: molecular chaos failing at the daemon--gas level---the device-level manifestation of $I_{G}\to 0$ in Eq.~\eqref{eq:learn}. The full thermodynamics, from the H-functional and the information-daemon boundary to the Knudsen regime map and the gas--liquid--solid phase picture, appears in Appendices~\ref{app:labels}--\ref{app:ops}.


Eq.~\eqref{eq:ktp} also admits a transport reading through its moments,
\begin{equation}
\rho \;=\; \int f\,dv\,de, \qquad
\rho\,u \;=\; \int v\,f\,dv\,de, \qquad
\tfrac{3}{2}\,\rho\,T \;=\; \int \tfrac{1}{2}\,|v-u|^{2}\,f\,dv\,de,
\end{equation}
namely the plugin density or coverage, the improvement drift, and the exploration temperature; the gate enters as the external selection field, the daemon outside the gas, giving the system its direction.

The analogy is bounded. The conservation ledger shows total impulse $\Sigma v$ is \emph{not} conserved, so no Navier--Stokes closure exists, and we claim no literal turbulence, vortex streets, Magnus force, or Bernoulli law. The transport coefficients name measurable quantities---the retrieval radius as $\kappa$, the edit damping as $\mu$, the sampling noise as $D$---and every regime name is an analogy for a measured index, not a derived constant; the full defense, with the Euler-type closures and the complete license inventory, appears in Appendices~\ref{app:labels}--\ref{app:ops}.

\section{Experiments}\label{sec:experiments}

The experiments test the theory's signatures on a minimal WebShop~\cite{yao2022webshop} training loop. Every measurement is a \emph{signature test} at one of three license levels: \emph{L0 architectural correspondence}, \emph{L1 own-equation prediction}, and \emph{L2 borrowed name}. Four pre-registered groups carry the load: fluctuation scaling across population sizes and edit densities (Proposition~\ref{prop:fluct}); the collision channel, probed by fan-3 retrieval against fan-1 controls; NESS relaxation in 500-round runs (Section~\ref{sec:ness}); and the gate-versus-independent score split of Proposition~\ref{prop:plateau}. Radius and edit-budget sweeps complete Table~\ref{tab:headline}. As the first kinetic theory of this loop, comparisons are against pre-registered criteria and within-design controls, not external methods.

The environment is WebShop text mode: 1{,}000 products, 6{,}910 synthetic-attribute goals, in-process BM25 simulation, search/click/end actions, reward 1 for an attribute-and-price-matched purchase; CPU, one public LLM API, seeded and reproducible (operator instantiation and full protocol: \S\ref{sec:device}).

\subsection{The faithful minimal instance}\label{sec:device}

To make the equation measurable, we instantiate each operator on the smallest system that preserves the loop's structure. The correspondences are licensed \emph{L0 structural}, justified by the code architecture rather than analogy: instance$=$particle, co-activation$=$collision, retrieval$=$interaction range, gate$=$daemon, reward$=$injection, and rejection$=$dissipation.

A plugin is the 4-field parameter vector $\theta = $(search rank, read policy, price floor/ceiling); a version instance adds a tag profile (spaCy nouns of its seeded tasks), version counter, and usage history---mirroring production registry parameters (discrete scalars). Retrieval scores instances by tag Jaccard, $\exp(-\alpha d)$, sampling from the top-$k$; edits are $\le L$ schema-validated single-field changes proposed by the reflector from a contrastive feedback table; the gate replaces the population only on strict held-out improvement---the candidate's validation score must exceed the incumbent's, both measured on the same held-out set (20 evaluation episodes per candidate, ties rejected; the incumbent baseline is the initial library's validation score $v_0$). Full specification: Appendix~\ref{app:device}.

The protocol is frozen before any run: $N_{\mathrm{eff}}$ := population size $N$ (occupied-grid interpretation); arms $N \in \{5,50,500\}$ $\times$ seeds $\{11,21\}$; retrieval fixed at $(k{=}3,\alpha{=}1)$ during evolution; radius sweep on a frozen $N{=}50$ library over $\alpha \in \{0.25{\ldots}4\}$ and $k \in \{1,2,3,5\}$, 100 tasks per cell; the parallel seed pair fixed in advance; no parameter changed after results are seen (full protocol in the drivers' frozen headers and run metadata).

\subsection{Fluctuation scaling in population size}\label{sec:exp-p1}

Prediction (\emph{L1 own-equation}): run-to-run dispersion of the population score decays with $N$ ($\beta < 0$ on log-log).

The population-size arms (six runs, $N \times 2$ seeds, $(k{=}3,\alpha{=}1)$, 10 rounds; all finals validation measurements) give finals (seed 11/21) $0.203/0.292$, $0.216/0.267$, $0.244/0.172$ ($N{=}5/50/500$); accepts $2/6$, $2/3$, $2/2$ (per-arm $v_0$ in Appendix~\ref{app:records}). Under this baseline the $N{=}500$ dilution-starvation disappears (2 accepts both seeds); the fitted slope is flat, with per-arm std $0.044 \to 0.025 \to 0.037$ and $\hat\beta = -0.04$ ($r^2 = 0.12$, 3 points, $J{=}2$). \emph{Verdict: the fluctuation signature is not detectable at this power}, neither as the predicted $-1/2$ nor as any trend. Table~\ref{tab:headline} collects this and the remaining headline probes with their pre-registered criteria and verdicts.

\begin{table}[H]
\centering
\scriptsize
\setlength{\tabcolsep}{4pt}
\caption{Pre-registered headline probes (provenance: Appendix~\ref{app:records}, which records every pre-registered probe ID including P3 and P5). Hatted quantities are sample estimators of the $\delta$ of Proposition~\ref{prop:plateau}.}
\label{tab:headline}
\begin{tabular}{@{}llll@{}}
\toprule
Probe & Criterion & Measured & Verdict \\
\midrule
P1 scaling, $b{=}3$ & $\hat\beta < 0$ & $-0.04$ ($r^2{=}0.12$) & flat; undetectable at $J{=}2$ \\
P1 dial, $b{=}9$ & report as measured & $-0.58$ ($r^2{=}0.94$) & matches $-1/2$, directional \\
P2 radius peak & peak at finite $\alpha$ & $\alpha{=}2$: $0.1684$ & supported, directional \\
P4 Goodhart & $\widehat\Delta_{\mathrm{gate}}{>}0$, $\widehat\Delta_{\mathrm{ind}}{\approx} 0$ & $+0.0064$ vs.\ $-0.0486$ & fires (channel b) \\
F1 laminar & additive post-transient window & NESS plateau at 500r & window open at $N{=}2000$ \\
Collision & fan-3 vs fan-1 final & $0.495$ vs $0.301$ & net gain $+0.19$ \\
$\varepsilon$ phase & monotone-smooth response & $2.7{\to}0.7$, max drop $1.0$ & smooth; no jump (bar $2.0$) \\
\bottomrule
\end{tabular}
\end{table}

\begin{figure}[!t]
\centering
\includegraphics[width=\columnwidth]{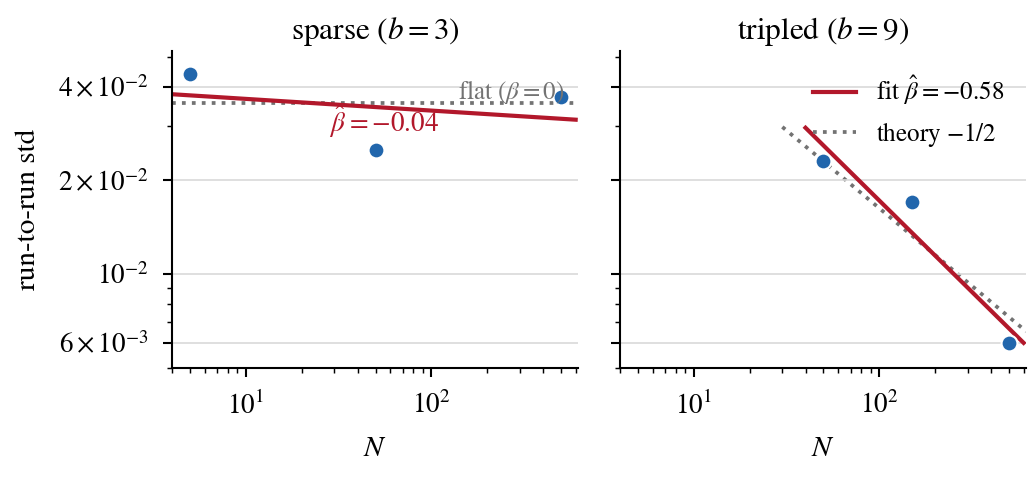}
\caption{Density-gated scaling (corrected gate). Points: per-arm standard deviations of the two seed finals (records: Appendix~\ref{app:records}). Left: sparse density, flat $\hat\beta=-0.04$. Right: tripled density, CLT emerges at $\hat\beta=-0.58$ vs.\ $-1/2$.}
\label{fig:density-gated}
\end{figure}

The density dial (formal8: $b{=}9$, 2 seeds, 30 rounds) reads the regime clause on the device: the CLT signature $\beta=-1/2$ appears once collision density suffices; the $b{\to}\Kn$ mapping is a \emph{heuristic}, so the slope is reported as measured. Per-arm std across $N \in \{50,150,500\}$: $0.023 \to 0.017 \to 0.006$ (finals in Appendix~\ref{app:records}); $\hat\beta(b{=}9) = -0.58$, $r^2 = 0.94$ (Fig.~\ref{fig:density-gated}). With the flat $b{=}3$ reading: \emph{sparse density buries the signature; tripled density brings it out at the predicted magnitude} ($-0.58$ vs.\ $-1/2$; directional).

The collision channel is probed by formal9: a no-op line had pinned the retrieval fan to one instance per episode, so ``co-activation $=$ collision'' was only structural. Under the fixed harness, a fan-3 arm (credit to the best of three instances) reaches mean final $0.495$ vs.\ $0.301$ for the fan-1 control (Appendix~\ref{app:records}): \emph{co-activation brings a net gain} of $+0.19$, so the collision channel is measurably load-bearing.

The scale probes (formal11/12, corrected gate) ground the saturation reading of the certificate: their 24 skill groups traverse only about $10\%$ of the expected hitting interval, so late in training the binding constraint is the resolution floor, not the skill.  formal11 ($N{=}2000$, $b{=}6$, 150 rounds): $v_0{=}0.184 \to 0.274$, 8 accepts, and a per-accept fit of $\hat\mu{=}0.014$ with $r^2{=}0.88$, so entrance and plateau persist at this dilution. formal12 ($N{=}50$, $b{=}9$, 500 rounds): $v_0{=}0.208 \to 0.465$, 8 accepts, tail $0.003$/accept, the pre-registered NESS discriminator fires at $r^2{=}0.80<0.9$; \emph{convergence is observed as NESS relaxation}, and the pre-correction ``entrance ${>}500$ rounds'' bound is retired (Appendix~\ref{app:records}).

Across the pre-registered probes, six of the eight L1 signatures end supported or observed, NESS relaxation included; the probed \emph{L2}-name candidates fail as observables---Bernoulli, K\'arm\'an vortex streets, gravity/acoustic waves---and the Magnus correlation fires only gate-dependently, licensing no circulation claim (formal13b). Criteria are frozen in the released drivers' headers, and the supported/borderline/unsupported triage is a post-hoc layer over pre-registered observables and thresholds; Tables~\ref{tab:license-l0}--\ref{tab:license-boundary} (Appendix~\ref{app:labels}) give the complete inventory, and a corrected-gate verdict summary appears in Table~\ref{tab:verdicts}.

\section{Conclusion and Discussion}\label{sec:discussion}

The gated self-evolution loop admits a statistical theory scoped by four architectural axioms: a machine-verified hitting-time certificate, a measured signature set with its license inventory, and an analogy boundary located by the conservation ledger. Device records are small (two seeds per arm, one model, one task family); the certificate is instance-independent, the measurements are not. Under the corrected instrument the density-gated readings stand, with a near-additive waterline and CLT scaling emerging at tripled edit density, and the remaining windows stay open under pre-registered probes. As a navigation instrument, the theory prices validation budgets, flags silent collapse, and marks which designs exit its coverage.

\begin{AIdeclaration}
Generative AI (MiniMax-M3 and AI coding assistants) was used in the following roles, all disclosed: (1) as the \emph{reflector} inside the experimental device---proposing bounded skill edits under schema validation and the held-out gate (an \emph{object of study}, not a writing aid); (2) for code development (device and analysis boilerplate); (3) for experiment execution under author direction and the pre-registered protocols; (4) for reference discovery and for drafting/polishing of the manuscript under author direction---all cited references were manually verified against the sources, and all scientific content, theorem statements, and conclusions were decided and reviewed by the authors; (5) for Lean proof development assistance---all proofs are machine-verified (sources in the supplementary).
\end{AIdeclaration}

\begin{Reprostatement}
All experiments run on CPU with one public API (MiniMax-M3) and a public environment (WebShop). The device (ktp/), the pre-registered drivers (run\_formal*.py, decision criteria frozen in headers), and the raw artifacts (runs/*.json) are released in the supplementary; every number in Section~\ref{sec:experiments} is reproducible from them; seeds are in the run metadata; Lean sources for Theorem~A are included. One limitation: the reflector calls a hosted LLM whose behavior may drift between versions---the released artifacts are the ground truth for all reported numbers, and the drivers can be re-run against any compatible endpoint.
\end{Reprostatement}

\bibliographystyle{iclr2027_conference}
\bibliography{references}

\clearpage
\setcounter{page}{1}
\renewcommand{\thepage}{A\arabic{page}}
\appendix
\begin{center}
{\large\scshape Appendices}
\end{center}
\vspace{0.6em}
\section{Lean Verification Record}\label{app:lean}

Theorem hierarchy A--E, verified in the released Lean sources---zero sorry, full build green on 2026-08-20 (8{,}712 jobs; Lean 4.33 + Mathlib 4.33), grep audit clean:
\begin{itemize}\itemsep1pt
\item \emph{(A) Hitting-time certificate.} 12 theorems + 2 auxiliary lemmas; the expectation bound is machine-checked, and hitting-time measurability is proven, not assumed.
\item \emph{(D) Fluctuation scaling.} the geometric autocorrelation limit $(1+\phi)/(1-\phi)$ is Lean-verified.
\item \emph{Support facts.} the Henderson/Carnahan--Starling contact-value leading term $5/2$~\cite{carnahan1969} (slope form, Lean-verified); algebraic-cooling break below the H\'ajek logarithmic guarantee line~\cite{hajek1988cooling} (Lean-verified).
\item \emph{Gate monotonicity.} the gate score under the strict gate is monotone (Theorem 5 of the reservoir).
\item \emph{(E) Separation necessity.} cited from the formalized reservoir Theorem 7 (separability is necessary); the necessity direction is what makes the certificate conditional on the design axiom.
\end{itemize}
Build instructions ship with the released sources.

\subsection{Proof of Theorem~\ref{thm:hitting}}

Write $V^{*}_t$ for the deployed (best accepted) validation score after round $t$, and let $V_0$ denote the initial validation gap, $V_0 = V^{\max} - V^{*}_0$, with $V^{\max}$ the score cap; $T_{\varepsilon}$ is the first round at which the gap falls to $\varepsilon$ or below. Under H1$'$, whenever the gap has not closed there exists a strictly improving edit with gain ${\ge}\Delta$, the reflector proposes one with probability ${\ge}p_{\min}$, and under H2--H3$'$ every accepted proposal enters the deployed lineage while a rejected proposal changes nothing. Hence for $t < T_{\varepsilon}$ the conditional expected gap reduction per round is at least $p_{\min}\Delta$,
\[
\mathbb{E}\big[\,G_t - G_{t+1} \,\big|\, G_1,\dots,G_t\,\big] \;\ge\; p_{\min}\Delta ,
\]
with $G_t$ the gap at round $t$. Validation scores are bounded, so the per-round reduction is uniformly bounded and optional stopping applies; summing the drift,
\[
p_{\min}\Delta\;\mathbb{E}[T_{\varepsilon}] \;\le\; G_0 \;=\; V_0 ,
\]
which is the stated bound (for the threshold version, $\mathbb{E}[T_{\varepsilon}] \le (V_0-\varepsilon)/(p_{\min}\Delta)$). Monotonicity of the best-observed score along the accepted lineage is immediate from H2--H3$'$: the deployed version changes only on a strict improvement. Robustness to a change of state space holds because the argument uses only the per-round drift floor, not the coordinates in which scores are represented. The derivation is machine-checked in the released Lean sources. H1$'$ itself is a design axiom rather than an estimable quantity---the existence of an unexplored improving edit is not decidable from logs; what the device measures is the realized drift of the accepted lineage---operationally, the acceptance rate times the mean accepted gain---and the certificate is applied post hoc as a consistency check on that realized drift: when the realized drift falls below the resolution floor, the binding constraint is measurement resolution, not training time (the saturation reading of Section~\ref{sec:exp-p1}).

\subsection{Theorem~\ref{thm:demon} in the formalized reservoir}

The formalized reservoir proves the necessity direction explicitly (Theorem 7): its model admits measurement sets that overlap the evolving population, and within that model non-separability breaks the drift premise---the improvements the gate registers then need not track the out-of-sample objective, so the certificate's premises are void. Separability is therefore not a convenience but the condition of applicability: the full statement and machine-checked proof are in the released sources.

\subsection{Derivation route for Proposition~\ref{prop:fluct}}

Under Theorem~\ref{thm:meanfield} the population is an exchangeable ensemble of $N_{\mathrm{eff}}$ effectively independent contribution streams, and the summed score over $T$ rounds is a stationary sequence with absolutely summable autocorrelation $\rho_k$. Its variance takes the Green--Kubo form $\mathrm{Var} = \tfrac{T\sigma^2}{N_{\mathrm{eff}}}\bigl(1 + 2\sum_k \rho_k\bigr)$, whence $\mathrm{std} \propto N_{\mathrm{eff}}^{-1/2}$ and $\beta = -1/2$ on log-log; the geometric-autocorrelation special case $(1+\phi)/(1-\phi)$ is verified in the released Lean sources. A failure of decay under adequate power is the diagnostic for excessive inter-particle correlation.

\subsection{Remark on Proposition~\ref{prop:plateau} (exhaustiveness)}

Under the strict gate (H2), an acceptance is exactly an event with positive held-out gain, so persistent acceptances coincide with $\delta_{\mathrm{gate}}>0$: the middle case (acceptances persist while $\delta_{\mathrm{gate}}$ vanishes) is empty by the definition of acceptance---nor can gains ${\to}0^{+}$ reopen it: any positive mean gain still satisfies $\delta_{\mathrm{gate}}>0$, which is channel b's own clause. The dichotomy is therefore exhaustive given H2; what H2 buys here is one more instance of the gate's design carrying the theory.

\section{Device Specification}\label{app:device}

The environment is WebShop text mode, in-process Flask simulation with a pure-Python BM25 search index (no browser, no Java). Corpus: 1000 products; task goals are generated synthetically from item attributes (6{,}910 goal instances; human-instruction goals cover only 13 of the 1000 products and are not used). Episode mechanics: the agent sees the current page text, can search[keywords], click[value] (buttons, links, options), end; reward $=1$ when the bought item matches the target within the price bound, graded partial otherwise. Determinism: task index $\to$ goal via the simulator's session index; with a seeded RNG every rollout is reproducible.

A plugin is the 4-field parameter vector $\theta = (\text{search\_rank},\ \text{read\_policy},\ \text{price\_floor},\ \text{price\_ceiling})$; a version instance is $(\theta,\ \text{tag profile},\ \text{version}\ \#,\ \text{usage history})$. The tag profile is built from task-query words (spaCy nouns) of the task the instance was seeded on; the retrieval distance between a task and an instance is $1-\mathrm{Jaccard}(Q_{\mathrm{task}},\ \mathrm{tags}_i)$.

Retrieval scores instances by $\mathrm{score}_i = \exp(-\alpha\, d_i)$; candidate set = top-$k$ by score; the executed instance is sampled from that set with weight proportional to score (seeded RNG). $(k,\alpha)$ are the two collision-radius knobs.

Each round, the reflector (MiniMax-M3) receives the per-instance contrastive table: use counts, per-task rewards, mean reward, trailing reward lists, and parameter vectors; it proposes at most $L$ field edits targeting the failing instances (one field per edit, values from the legal grid; schema-validated, an invalid proposal is discarded wholesale). LLM failures fall back to a deterministic heuristic so the harness never blocks.

The gate evaluates the candidate library on a held-out task set (20 episodes); replacement requires strict improvement of the held-out mean reward over the incumbent (ties rejected). Rejected candidates leave no trace in the population, and their evidence feeds the next reflection.

\section{Experiment Records}\label{app:records}

\subsection{Population-size arms (Prediction 1, corrected gate)}

\begin{tabular}{cccccc}
\toprule
$N$ & seed & $v_0$ (val) & $s_{0,\mathrm{train}}$ & waterline (accepted) & accepts \\
\midrule
5 & 11 & 0.0373 & 0.1160 & 0.1924, 0.2033 & 2 \\
5 & 21 & 0.1000 & 0.2569 & 0.1182, 0.1880, 0.2237, 0.2346, 0.2402, 0.2921 & 6 \\
50 & 11 & 0.2083 & 0.0830 & 0.2090, 0.2158 & 2 \\
50 & 21 & 0.1425 & 0.1861 & 0.1894, 0.2439, 0.2665 & 3 \\
500 & 11 & 0.1478 & 0.2708 & 0.1597, 0.2444 & 2 \\
500 & 21 & 0.0507 & 0.1455 & 0.1257, 0.1715 & 2 \\
\bottomrule
\end{tabular}

Regression fit (3 arms): $\hat\beta = -0.043$, $r^2 = 0.12$ (Fig.~\ref{fig:fluct}); per-arm std of finals $0.044$ ($N{=}5$), $0.025$ ($N{=}50$), $0.036$ ($N{=}500$)---flat. Note the train--val baseline gap that motivated the correction is visible per arm (e.g.\ $N{=}5$/s21: train $0.257$ vs.\ val $0.100$): under the old mixed-set gate this arm's incumbent sat at $0.257$ and only one edit ever cleared it; under the honest bar the same arm accepts six edits and reaches $0.292$. \emph{Provenance:} formal8\_p1\_*\_run.json; the pre-correction table (train-baselined, $\hat\beta=+0.178$) is retained in formal\_p1\_*\_run.json and is superseded. Seed-level uncertainty at $J{=}2$ is uninformative by construction: the slope standard error is $0.12$ with one degree of freedom per fit, so interval estimates span the admissible range. The pre-registered judgment in the driver header ($|\hat\beta+1/2|<0.15$, an exploratory direction test at 3 points $\times$ $J{=}2$) records the outcome as a deviation---the power statement, not a confidence interval, is the correct summary at this $J$. A seed-level bootstrap is degenerate at $J{=}2$ (each arm's standard error already uses both seeds, so resampling has no room to vary), which is why the analytic standard error rather than a bootstrap interval is reported. The power analysis itself was pre-registered in the experiment plan: with arms spanning 2 orders of magnitude in $N$, $J{=}10$ gives slope SE ${\approx}0.17$ (4$\sigma$ discrimination of $-1/2$ vs.\ 0) and $J{=}6$ is marginal ($2.2\sigma$); the executed $J{=}2$ (quota decision recorded 2026-08-28, plan v2.1) falls below both, and the plan's pre-registered fallback---downgrade to an order-of-magnitude consistency check---is exactly the verdict reported. The plan ships with the supplementary.

\subsection{Radius sweep (Prediction 2)}

$\alpha$: $0.25\to0.1448$, $0.5\to0.1491$, $1.0\to0.1569$, $2.0\to0.1684$, $4.0\to0.1680$. $k$ (at $\alpha{=}1$): $1\to0.1630$, $2\to0.1559$, $3\to0.1569$, $5\to0.1566$. Frozen $N{=}50$ library (seed 11 initial population), 100 tasks per cell, 15s per cell.

\subsection{Closure replay (Prediction 3)}

\emph{A worked collision-gate round} (concrete instance of the loop of Section~\ref{sec:operators}, $N{=}5$, seed 11, corrected gate): the round-0 incumbent is the initial library's validation score $v_0 = 0.0373$. At round 1 the reflector proposes one bounded edit to the weakest instance, the gate measures the candidate on the 20-episode held-out set, the score rises $0.0373 \to 0.1924$ ($\Delta V = +0.155$), and the edit is kept. Rounds 2--5 propose further edits; each candidate measures below the incumbent and the gate rejects all four---the daemon's dissipation, visible in the log as zero $\Delta V$. Round 6 accepts ($+0.011$); rounds 7--10 reject. Ten rounds move the waterline from $0.0373$ to $0.2033$ with exactly two gate admissions; the trajectory is monotone by construction.

Additivity replay on the corrected-gate runs with $\ge 3$ watermark points: $N{=}5$ seed 21 (7 points): $\hat\mu = 0.031$, $\hat v_0 = 0.107$, $\max|\mathrm{dev}| = 0.024$; $N{=}50$ seed 21 (4 points): $\hat\mu = 0.043$, $\hat v_0 = 0.147$, $\max|\mathrm{dev}| = 0.012$. Under the honest baseline the waterline is much closer to additive than the pre-correction data suggested (deviations ${\le}0.024$ vs.\ up to $0.32$ before): the large first-burst concentration previously attributed to non-additivity was mostly the train--val baseline gap.

\subsection{Goodhart probe (Prediction 4)}

Gate deltas pooled over 6 runs: $\widehat\Delta_{\mathrm{gate}} = +0.0064$; independent-set deltas (final population on a fresh 60-task segment): $\widehat\Delta_{\mathrm{ind}} = -0.0486$. Diagnostic: channel (b) (Goodhart precursor). Alternative readings are bounded by construction: (i) evaluation noise cannot produce opposite signs at these magnitudes, since both deltas are means over the same 20-episode budget (gate arm over accepted events, independent arm over a fresh 60-task segment); (ii) task-distribution shift is bounded by the generator---both sets are synthetic-attribute goals drawn from the same corpus; (iii) regression to the mean is not a confounder here but the effect itself: the gate delta is measured on the selection set and the independent delta out of sample, which is precisely the selection-induced inflation the Goodhart reading names.

\subsection{Corrected-gate re-measurement (formal8)}

All three blocks on the corrected instrument (incumbent $=$ validation $v_0$; pre-registered in the driver header): \emph{P1-fix} ($b{=}3$, 6 runs) --- see the table above; $\hat\beta = -0.043$, $r^2 = 0.12$. \emph{Dial-fix} ($b{=}9$, 30 rounds, 6 runs): finals (s11/s21) $0.345/0.391$ ($N{=}50$), $0.250/0.217$ ($N{=}150$), $0.274/0.261$ ($N{=}500$); accepts $4/4$, $1/1$, $3/6$; per-arm std $0.0232/0.0165/0.0062$; $\hat\beta = -0.576$, $r^2 = 0.94$ (slope standard error $0.15$ at $J{=}2$; the pre-registered dial judgment reports the slope as measured, with no convergence-to-$-1/2$ presumption---driver header). \emph{F-Drive-fix} ($N{=}500$, $b{=}6$, 60 rounds): $v_0 = 0.1478 \to 0.2444$, 2 accepts, lift $+0.0966$ (below the pre-registered $+0.10$ bar: marginal). Wall 15056\,s total.

\begin{figure}[h]
\centering
\includegraphics[width=0.5\textwidth]{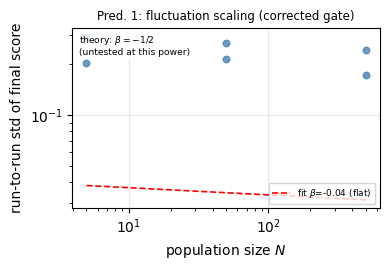}
\caption{Prediction 1 under the corrected gate: six endpoints, flat fit $\hat\beta=-0.04$---no detectable $N$-scaling at $J{=}2$.}
\label{fig:fluct}
\end{figure}

\subsection{Collision-channel probe (formal9)}

$N{=}50$, $b{=}6$, 30 rounds, seeds \{11,21\}, fan-3 arm vs.\ fan-1 control: finals $0.529/0.461$ (fan-3, mean $0.495$) vs.\ $0.309/0.292$ (fan-1, mean $0.301$) (Fig.~\ref{fig:device-traces}); accepts $5{+}2$ vs.\ $5{+}4$; per-episode credit spread over $13/11$ vs.\ $17/14$ documents; wall 6857\,s. Pre-registered descriptive verdict: co-activation brings a net gain ($+0.194$).

\begin{figure}[h]
\centering
\includegraphics[width=0.48\textwidth]{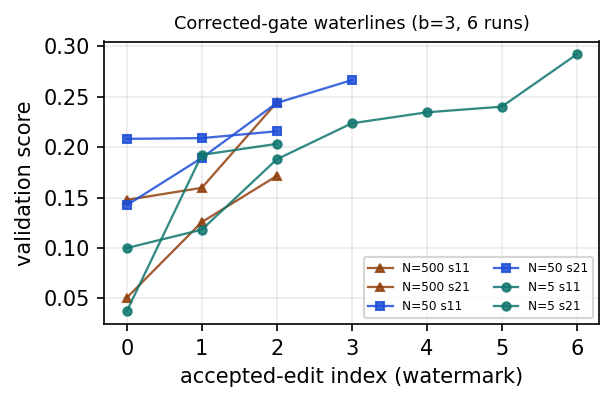}\hfill
\includegraphics[width=0.48\textwidth]{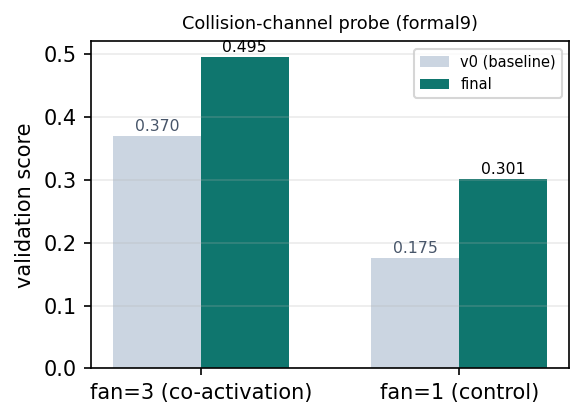}
\caption{Device traces. Left: six $b{=}3$ validation waterlines (near-additive, deviations ${\le}0.024$). Right: collision probe---fan-3 final $0.495$ vs.\ fan-1 $0.301$ (formal9).}
\label{fig:device-traces}
\end{figure}

\subsection{Gate-strictness sweep (formal10, corrected gate)}

$N{=}50$, $b{=}3$, $\varepsilon \in \{0,0.005,0.01,0.02,0.05\}$ $\times$ seeds $\{11,21,31\}$, 10 rounds: acceptances per seed $[2{+}3{+}3,\;2{+}3{+}3,\;0{+}3{+}4,\;0{+}3{+}3,\;0{+}1{+}1]$ (means $2.7,2.7,2.3,2.0,0.7$); finals per $\varepsilon$ in formal10\_result.json. Pre-registered jump bar (max adjacent drop $\ge 2.0$) not met: \emph{smooth control-parameter response}; the qualitative monotone-smooth reading of the pre-correction sweep survives the corrected instrument.

\subsection{Scale probes under the corrected gate (formal11/12)}

\emph{Giant scale, re-run} (formal11 $=$ formal6's protocol under the corrected gate; $N{=}2000$, $b{=}6$, fan as-run 1, 150 rounds, seed 11): $v_0 = 0.184$ (validation), final $0.2738$, 8 accepts (waterline $0.184 \to 0.2738$); post-transient per-accept fit (skip 4, 5 points): $\hat\mu = 0.0141$, $r^2 = 0.878$---the pre-registered laminar-window criterion ($R^2{>}0.95$) is not met and the $R^2{<}0.9$ branch records the outcome as measured: \emph{entrance/plateau persists at this dilution within 150 rounds}; wall 8328\,s. \emph{Dense long run, re-run} (formal12 $=$ formal7's protocol under the corrected gate; $N{=}50$, $b{=}9$, fan as-run 1, 500 rounds, seed 11): $v_0 = 0.2083 \to$ final $0.4651$, 8 accepts; fit $\hat\mu = 0.047$, $r^2 = 0.797$; last-third slope $0.0029$/accept---the pre-registered NESS discriminator ($r^2{<}0.9$) fires: \emph{the plateau is reached within 500 rounds}, convergence as relaxation evidence; wall 15496\,s. Per-accept increments compare as $0.014$ ($N{=}2000$) vs.\ $0.047$ ($N{=}50$): dilution thins the value channel, not the accept count.

\subsection{Signature re-runs under the corrected gate (formal13a/13b)}

Controlled copies of the formal4/5 drivers with criteria logic verbatim and fan as-run 1 (the gate correction is the single changed variable; the fan effect is formal9's). \emph{A-group} (formal13a, wall 22617\,s): F-Turbulence CVs $0.39/0.78/0.87$ (mean $0.68 > 0.6$: supported); F-Transition CV($\alpha$) $1.24, 0.95, 0.97, 0.20, 0.51$---non-monotone, no transition (borderline, matching judge\_rederived); F-Damping with a deterministic proposer still applies edits (1 accept; waterline $0.224 \to 0.252$, no decay)---the no-drive design remains confounded, \emph{withdrawal maintained}; F-Drive $0.148 \to 0.244$, 2 accepts, lift $+0.097$ (marginal; reproduces formal8's F-Drive-fix exactly); F-Conduction cross-seed spread $0.031$ ($\alpha{=}0.25$) vs.\ $0.007$ ($\alpha{=}4$): supported; F-Brownian waterline $0.223 \to 0.241 \to 0.308$, monotone: supported. \emph{B-group} (formal13b, wall 9802\,s): Bernoulli density$_A = 0.103$, density$_B$ undefined (not supported); Coanda $41.2\%$ vs.\ $30\%$ baseline (supported); K\'arman 0 alternators (not supported); Magnus $r = 0.638$ ($n{=}14$; the pre-registered $r{>}0.3$ criterion fires---flipped from $r{=}0.04$; see the license table: the classical mechanism remains absent); waves near/far drop $0$ (not supported); tides $\mathrm{ac}_{\mathrm{lag}6} = -0.87$ vs.\ $\mathrm{ac}_{\mathrm{lag}3} = 0.37$ on 11 accept-deltas (not supported---flipped; the accept-indexed observable is under-powered at 12 watermark points); vortex top-3 edit share $0.500$ over 6 edited docs (criterion ${\ge}0.5$: supported---upgraded from the pre-correction $41.7\%$ borderline).

\subsection{Giant scale probe (formal6, pre-correction instrument)}

$N{=}2000$, fan as-run $1$ (the driver header says 5; see the fan note in the supplementary README), $b{=}6$, 150 rounds, seed 11: $s_0 = 0.1245$ (train-based), final $0.240$, 4 accepts (increments $+0.057, +0.011, +0.025, +0.023$); laminar-window fit not administrable (insufficient post-transient points); wall 9281\,s. \emph{Caveat:} this probe ran under the pre-correction gate; its dilution reading is retained as pre-correction evidence, and the corrected-gate N-axis boundary remains to be re-measured. \emph{Superseded by formal11 (above):} under the corrected gate the same protocol yields 8 accepts and final $0.274$.

\subsection{Time-axis dense probe (formal7, pre-correction instrument)}

$N{=}50$, fan as-run $1$, $b{=}9$, 500 rounds, seed 11: final $0.4404$, 13 accepts; post-burst linear fit (skip 4, 10 points): $\hat\mu = 0.0171$/accept, $R^2 = 0.928$ (pre-registered bar $R^2{>}0.95$ not met); last-third slope $0.0087$/accept (drift alive); wall 16171\,s. Verdict: \emph{intermediate}---the entrance length exceeds 500 rounds at this density. \emph{Same pre-correction caveat as formal6. Superseded by formal12 (above):} under the corrected gate the run reaches $0.465$ with tail $0.003$/accept (NESS plateau).

Every headline number above is field-level traceable to the released run artifacts: each formal driver writes a result JSON under runs/ in the supplementary, and the field-level provenance table (paper number $\to$ source file/field $\to$ value) ships inside the supplementary package alongside them.

\section{Structural Correspondence: DSH $\leftrightarrow$ Hard-Sphere Assumptions}\label{app:corr}

\begin{tabularx}{\textwidth}{@{}X X X@{}}
\toprule
Hard-sphere/kinetic assumption & DSH plugin property & Cordis formalization \\
\midrule
permutation symmetry & order-independent composition & Independence, Defs.~17--19, Thm.~20 \\
microscopic reversibility / Liouville & mount--unmount reversibility & revertible effects, Thm.~61 \\
no instantaneous self-coupling & acyclic dependency graph & reactive coeffects, Thms.~63--64 \\
short-range contact interaction & typed contracts & commutative keys, Def.~39, Thms.~40--42 \\
\bottomrule
\end{tabularx}

A fifth result underpins the replacement primitive itself: the hot-replacement calculus is confluent (Thm.~73)---order-insensitive, residue-free skill substitution is a theorem, which is the formal foundation on which the bounded-edit discipline (H3$'$) and, historically, SkillOpt's empirical stability rest.

\begin{proposition}[Obstruction by construction]\label{prop:obstruct}
(i) If the trainable state is a single shared document (skill-library architecture), the empirical distribution is one Dirac mass: $Q[f,f] = 0$ identically (no distinct pairs), $N_{\mathrm{eff}} \equiv 1$, and every population-level statement of this paper is vacuous, not false. (ii) If the runtime is a star of singleton remote services (MCP architecture), A1 is untestable (no interchangeable version instances) and A2--A3 are obligations the runtime cannot discharge: side-effecting RPC carries no tracked inverse, and unrestricted call graphs admit instantaneous cycles. In both cases \eqref{eq:ktp} is undefined, not inaccurate.
\end{proposition}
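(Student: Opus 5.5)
The proof goes by construction in each architecture. In each case I exhibit the empirical measure that the master equation \eqref{eq:ktp} would have to evolve, and then check which of its ingredients are defined. I take the empirical density to be $f^N_t=\frac1N\sum_{i=1}^N\delta_{(x_i,v_i,e_i)}$ over the versioned instance multiset licensed by A1--A4. The collision operator $Q[f,f]$ is taken as the bilinear sum over distinct co-activated pairs $i\neq j$, with kernel $K$ and the post-state rule of Section~\ref{sec:operators}. $N_{\mathrm{eff}}$ is the pre-registered occupied-cell gauge. The claims then reduce to counting particles and pairs, and to checking whether the axioms can be discharged at all.

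\emph{Case (i).} In a skill-library architecture the trainable state is one shared document $D_t$, so the multiset has exactly one element at every $t$. Hence $f^N_t=\delta_{z_t}$ with $N=1$, and the occupied-cell gauge gives $N_{\mathrm{eff}}\equiv1$. The pair sum defining $Q$ ranges over $\{(i,j):i\neq j\}=\emptyset$, so $Q[f,f]=0$ identically. Note that this is not cancellation: co-activation of a document with itself is excluded by the acyclicity premise A3. For vacuity rather than falsity, I observe that every population-level statement is either a limit as $N_{\mathrm{eff}}\to\infty$ or a statement about inter-particle quantities. Theorem~\ref{thm:meanfield} is a limit in $N_{\mathrm{eff}}$. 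Proposition~\ref{prop:fluct} concerns a slope in $N_{\mathrm{eff}}$ across arms. The rarefaction ratio $\max_i\sigma_i/\sum_j\sigma_j$ equals $1$ trivially. Each of these has no admissible instance, since the hypotheses ask for a family indexed by $N$ or for distinct pairs. The statements therefore hold vacuously rather than being contradicted.

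\emph{Case (ii).} For the MCP star I argue axiom by axiom.
\begin{itemize}
\item \textbf{A1.} Each service is a singleton with no version multiset, so no permutation acts nontrivially on interchangeable instances. The symmetry claim has no test object.
\item \textbf{A2.} I exhibit a side-effecting RPC, such as an external write, whose inverse is not tracked by the runtime. Mount/unmount then does not return the global state, so the Cordis reversibility property (Thm.~61, Appendix~\ref{app:corr}) cannot be certified.
\item \textbf{A3.} Unrestricted call graphs admit a cycle $s\to s'\to s$ within one request. This yields instantaneous self-coupling and defeats the single-time-slice order.
\end{itemize}
Without A1--A3 there is no exchangeable multiset, hence no $f$. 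The transport term $v\cdot\nabla_x f$ and the operator $R$ also lack a domain.

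The main obstacle is making ``undefined, not inaccurate'' precise. It is informal unless the domain of \eqref{eq:ktp} is fixed explicitly. I would first state that domain: $f$ is a measure on versioned instances satisfying A1--A4. Undefinedness then becomes the failure of this domain condition, which is a type-level claim and not an approximation-error claim.
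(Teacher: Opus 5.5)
Your proposal is correct and follows the paper's own sketch. In (i), a singleton population has no distinct pairs, so the pair sum defining $Q$ is empty and all moments are ensemble-trivial. In (ii), A1--A3 are interface obligations that black-box remote semantics cannot enforce; you make this concrete with explicit witnesses (an untracked side-effecting write, a call cycle $s\to s'\to s$) and an explicit domain of definition for \eqref{eq:ktp}.

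One small fix: do not invoke A3 to exclude self-co-activation in case (i), because the shared-document architecture is exactly one where A3 fails (Table~\ref{tab:why-dsh}). The exclusion is already built into your sum over $i\neq j$, and appealing to A3 there conflicts with your own reading of ``undefined'' as failure of the A1--A4 domain condition.
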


\begin{proof}[Proof sketch]
(i) $Q$ is bilinear and its rate kernel counts ordered instance pairs; a singleton population has none, so all moments are ensemble-trivial. (ii) A1--A4 are interface obligations in the Cordis sense (checkable only where the runtime enforces structure); black-box remote semantics enforces none of them.
\end{proof}

\emph{Reading.} A1--A4 are obligations, not exclusions: a skill library extended with versioned instances, tracked inverses, enforced dependency acyclicity, and typed interfaces \emph{would} qualify---the proposition's content is that current everyday implementations do not provide these by construction, and that no kinetic statistics exist until they do.

\begin{table}[h]
\centering
\small
\caption{Why DSH-class runtimes, and not everyday skill/MCP stacks, support the gas description. Each axiom is a Cordis theorem for DSH; the everyday stacks fail at least one axiom, and both fail the zeroth requirement (a countable instance population with carried state).}
\label{tab:why-dsh}
\begin{tabular}{@{}p{3.1cm}p{3.35cm}p{2.95cm}p{3.15cm}@{}}
\toprule
Requirement & DSH plugin & Skill-document library & MCP tool stack \\
\midrule
instance population (zeroth) & registry of (plugin, version) instances with credit ledger & one shared document, edited in place & singleton remote services (star topology) \\
A1 permutation symmetry & holds, Thm.~20 & fails: prompt order/shadowing & partial: registration order irrelevant, but no interchangeable instances \\
A2 mount/unmount reversibility & holds, Thm.~61 (tracked inverses) & fails: in-place mutation, no inverse & fails: side effects, state kept server-side \\
A3 acyclic dependencies & holds, Thms.~63--64 & fails: free cross-referencing & fails: cyclic call graphs allowed \\
A4 typed contracts & holds, Def.~39, Thms.~40--42 & fails: natural-language interface & holds: JSON Schema I/O \\
\bottomrule
\end{tabular}
\end{table}

The correspondence is what licenses the gas vocabulary at all: each assumption has a concrete counterpart in the plugin runtime, and the ``kinetic lint'' procedure (per-plugin feature checklist, one rule per assumption) detects violations that would break the correspondence. The hard-sphere criterion reads ``hard'' as short-range, binary, single-timeslice-reversible---not elastic, which is the 2026-08-17 revision note: the driven-gas (inelastic) reading is what makes granular-gas clustering (Appendix~\ref{app:ops}, G.6) the operative physics instead of the classical elastic gas.

\section{Interpretation Boundaries}\label{app:labels}

Because the plugin gas does not conserve momentum ($\Sigma v$ changes under every operator; see the conservation ledger), no hydrodynamic limit in the Navier--Stokes sense exists, and we do not claim: (i) a turbulence cascade or any Kolmogorov-type spectrum; (ii) K\'arm\'an vortex streets (which require bluff-body wake instability from actual advection); (iii) Magnus effects (which require circulation and pressure differential in potential flow); (iv) Bernoulli's law (which requires pressure defined from momentum flux); (v) acoustic or gravity-wave dispersion. Where classical names appear in the experiment descriptions (``transition-like'', ``wave-like propagation''), they name an observable signature proposed by the analogy --- e.g., variance of acceptance intervals rising with retrieval radius; a reward depression propagating outward from a perturbed instance with damping --- and the claim is only that the signature was or was not observed. The theory's rigorous load-bearing elements are independent of the analogy: the hitting-time certificate (Theorem A, Lean-verified), monotonicity along the accepted lineage, the resolution law for gate power, and the fluctuation-scaling prediction (a CLT statement about population statistics).

The complete license inventory follows in Tables~\ref{tab:license-l0}--\ref{tab:license-boundary} give the full inventory of fluid-related notions used in this paper, classified by translation license, each with its device reading, its basis (architectural theorem, own-equation derivation, or required physics), its probe, and its measured verdict. The point of the inventory is the \emph{distribution} of verdicts: the license level organizes the outcome. Six of the eight L1 signatures are supported or observed (NESS relaxation included); tides flips to unsupported under the corrected gate on an under-powered observable; and the laminar window is bounded by the scale probes (plateau reached at 500 rounds, entrance persisting at $N{=}2000$). Of the L2 name-only candidates, Bernoulli, K\'arman, and waves fail as observables, and Magnus's bare correlation fires only after the gate correction ($r{=}0.04 \to 0.64$, $n{=}14$)---a gate-dependent transport reading that licenses no circulation claim. The mixed cases sit on the boundary: vortex now meets its pre-registered supported criterion, damping stays withdrawn, and the drive reading stays marginal. Since the triage labels are post-hoc (observables and thresholds were pre-registered, the categories were not), we report this as a descriptive classification of the verdict distribution, not a predictive test.

\begin{table}[h]
\centering
\small
\caption{\textbf{L0 structural correspondences} (licensed by the architecture, Appendix~\ref{app:corr}; not analogies).}
\label{tab:license-l0}
\begin{tabularx}{\textwidth}{@{}l X X X@{}}
\toprule
Kinetic notion & Device counterpart & Architectural basis & Status \\
\midrule
particle & plugin-version instance $(x,v,e)$ on the plugin$\times$version$\times$task grid & A1 permutation symmetry (Defs.~17--19, Thm.~20) & by construction \\
hard-sphere collision & co-activation in one pipeline $+$ credit/impulse exchange & short-range, binary, single-timeslice-reversible; typed contracts (Def.~39, Thms.~40--42) & instantiated: fan-3 vs fan-1 probe, final $0.50$ vs $0.30$ (formal9) \\
microscopic reversibility & mount/unmount reversibility & A2 revertible effects (Thm.~61) & by construction \\
no self-collision & acyclic dependency graph & A3 reactive coeffects (Thms.~63--64) & by construction \\
interaction range & retrieval radius (top-$k$, sharpness $\alpha$) & bounded-kernel condition of Thm.~B & measured (peak $\alpha{=}2$) \\
Maxwell daemon & validation gate, external to the gas & Thm.~E; gate score strictly monotone & verified, every run \\
energy injection & batch reward $R_{\mathrm{batch}}$ & driven (inelastic) gas clause & observed at every accept \\
dissipation & rejection drain (tie-rejection) & gate strictness & observed in gate logs \\
temperature & proposal variance under identical history (LLM bath) & thermostat identification & defined \\
Knudsen number $\Kn$ & per-round collision density (budget $b$, population $N$); $b{\to}\Kn$ mapping (heuristic) & regime clause & corrected dial: $\beta$ flat at $b{=}3$ $\to -0.58$ ($r^2{=}0.94$) at $b{=}9$ \\
NESS & drive--dissipation balance under the gate & conservation ledger & observed (150--500-round plateaus; formal12) \\
$H$-functional & population concentration; diversity $=-H$ & definition; no-gate drift form & (heuristic) \\
\bottomrule
\end{tabularx}
\end{table}

\begin{table}[h]
\centering
\small
\caption{\textbf{L1 own-equation signatures} (derived from the plugin master equation; predicted to hold). Verdicts from the pre-registered probes of \S\ref{sec:experiments}.}
\label{tab:license-l1}
\begin{tabularx}{\textwidth}{@{}>{\raggedright\arraybackslash}p{3.7cm} X l X@{}}
\toprule
Signature (derivation) & Device observable & Probe & Reading \& verdict \\
\midrule
fluctuation scaling (Prop.~D) & run-to-run dispersion vs.\ $N$ & formal8 (corrected gate) & flat $\hat\beta{=}-0.04$ at $b{=}3$ (undetectable, $J{=}2$); $\hat\beta{=}-0.58$, $r^2{=}0.94$ at $b{=}9$ --- \textbf{supported at density}, directional \\
intermittency, CV${\approx}1$ (Thm.~D $+$ discrete gate events) & acceptance-interval CV & formal13a F-Turb & CV $=0.39/0.78/0.87$; \textbf{supported} \\
boundary layer / cold start (driven-gas entrance region) & $s_0{\to}s_1$ jump then plateau & every run & \textbf{observed} \\
Coanda-like attachment (collision locality) & edits cluster near recently edited instances & formal13b & $41.2\%$ vs.\ $30\%$ baseline; \textbf{supported} \\
conduction-like flattening (retrieval as transport) & reward depression spreads and flattens with $\alpha$ & formal13a F-Conduction & $\alpha{=}4$ flattest (spread $0.031{\to}0.007$); \textbf{supported} \\
tides / period locking (driven NESS) & score autocorrelation at the round period & formal13b & corrected re-run: $ac_{\mathrm{lag}6}{=}-0.87$ vs.\ lag3 $0.37$ (11 accept-deltas, under-powered); \textbf{not supported} \\
Brownian drift under attribution (fluctuation $+$ drift) & watermark rises directionally with LLM attribution & formal13a F-Brownian & $0.223{\to}0.308$ monotone; \textbf{supported} \\
laminar developed flow (constant-drift regime) & additive waterline after the transient & formal11/12 & formal11: 8 accepts/150r at $N{=}2000$, entrance-bound ($r^2{=}0.88$); formal12: NESS plateau at 500 rounds (tail $0.003$/accept, $r^2{=}0.80$); \textbf{NESS observed}, additive window open \\
\bottomrule
\end{tabularx}
\end{table}

\begin{table}[h]
\centering
\small
\caption{\textbf{L2 name-only candidates} (require momentum conservation / a Navier--Stokes closure the gas does not have; predicted to fail as mechanisms---and none is claimed: three observables fail outright, Magnus's bare correlation proves gate-dependent).}
\label{tab:license-l2}
\begin{tabularx}{\textwidth}{@{}>{\raggedright\arraybackslash}p{3.7cm} X l X@{}}
\toprule
Classical name (required physics) & Why absent here & Probe & Reading \& verdict \\
\midrule
Bernoulli law (pressure from momentum flux) & $\Sigma v$ not conserved (ledger) & formal13b & fast-arm density undefined; \textbf{not supported} \\
K\'arman vortex street (bluff-body wake advection) & no advection term & formal13b & 0 alternating docs (pre-correction 1); \textbf{not supported} \\
Magnus effect (circulation $+$ pressure differential) & no potential flow & formal13b & bare correlation fires gate-dependently ($r{=}0.039$ pre-correction $\to 0.638$, $n{=}14$): a transport proxy, not circulation; \textbf{name not claimed} \\
gravity/acoustic waves (inertia $+$ restoring force) & no inertial term & formal13b & near/far drop $=0$; \textbf{not supported} \\
Kolmogorov cascade (NS inertial range) & no NS closure (\S\ref{sec:ness}) & not probed & \textbf{not claimed} \\
\bottomrule
\end{tabularx}
\end{table}

\begin{table}[h]
\centering
\small
\caption{\textbf{Boundary cases} (mixed L1/L2 character; the analogy's edge is genuinely fuzzy, and the readings are mixed --- itself a consistency check). Full per-probe narratives: Appendix~\ref{app:records}.}
\label{tab:license-boundary}
\begin{tabularx}{\textwidth}{@{}>{\raggedright\arraybackslash}p{3.7cm} X l X@{}}
\toprule
Phenomenon & Device observable & Probe & Reading \& verdict \\
\midrule
laminar--turbulent transition & acceptance-interval CV vs.\ control parameter $\alpha$ & formal13a F-Transition & non-monotone; \textbf{borderline} \\
vortex-like concentration & top-3 edit share & formal13b & $50.0\%$ over 6 edited docs; \textbf{supported}, small $n$ \\
viscous damping ($\mu$ is a naming convention) & perturbation decay at $b{=}0$ & formal13a F-Damping & \emph{withdrawn}: drive not actually off, confounded design \\
external drive without gate signal & forced $N{=}500$ arm & formal8/13a F-Drive & lift $+0.097$, below the $+0.10$ bar; \textbf{marginal} \\
\bottomrule
\end{tabularx}
\end{table}

\section{Operator Economy, Thermodynamics, and Regime Details}\label{app:ops}

\subsection*{F.1 Flow and external field}
During a forward pass the plugin set is frozen; occupied mass travels along the pipeline DAG layer by layer, and the skill coordinate, credit, and momentum are unchanged---pure transport, deterministic and bit-reproducible (the reproducibility/H2 property, not time-reversal symmetry; a replayed trajectory is not a microscopic crossing back of a wall). In this phase $Q=R=G=0$ and $\partial_t f + v\cdot\nabla_x f = 0$. The $v$-component of the flow along $x$ is the layer-depth progression; the action of the flow on the skill coordinate is currently degenerate (only edits move $s$), which is exactly the observation that the description's center of gravity sits with the collision operator today---if plugins ever carry continuous runtime state, flow takes over and the statistical description shifts from collisions to the flow field (heuristic).

The evaluation/workload category weights act as an external force $F_\tau(x)$: a load tilted toward one task type pulls the population toward that region of capability space. The LLM inference chain (provider hierarchy with failover) is the \emph{thermostat}: it injects noise and priors into both the collision rule (reflector) and the gate (verifier). The operationalization of temperature is the variance of edit proposals under identical history; a provider switch is a bath change, and a MCP failover is a transient temperature jump.

\subsection*{F.2 Collision structure, reaction channels, conservation ledger}

Two instances $i,j$ meet when they are co-activated in the same pipeline under task $\tau$, with rate kernel $K(x_i,x_j;\tau) \ge 0$; the collision cross-section is $\sigma_i = \int K(x_i,x';\tau)\,dx'\,d\tau$, and the contract width $d_i \propto \sqrt{\sigma_i}$ sets the amount of state exchanged per encounter. A collision event returns a batch reward $R_{\mathrm{batch}}$ from verification, and the reflector's attribution gives the post-state
\begin{equation}
e_i' + e_j' = e_i + e_j + R_{\mathrm{batch}},\qquad
v_i' = v_i + \eta\, a_i,\qquad
s_i' = s_i + \Delta s_i.
\end{equation}
Two structural points complete the picture. First, the collision is \emph{inelastic} in a driven-gas sense: $R_{\mathrm{batch}}$ is external work injected by task rewards, attribution redistributes it but total credit increases ($\Sigma e$ not conserved), and the edit impulse is likewise injection---``hard-sphere'' reads as short-range, binary, single-timeslice-reversible, not elastic. Second, the LLM is the \emph{sampler} of the collision rule: the explicit form of the post-state map is exactly what turns the collision kernel from a black box into an estimable object.

Reactions are species-level events aligned with the plugin lifecycle: append (nucleation, proposal rate from the exploration weight), delete (annihilation), update (lineage: a version increment, merged into collision), \emph{materialize} (gas $\to$ solid: sandbox trial mount becomes a registered plugin with a preset row), and unmaterialize (solid $\to$ gas). The key identification: sandbox trial-mounting is the \emph{gas phase} (free, reversible, load/unload at will), and materialization is \emph{solidification} (through the registry gate, entering the solid phase). The reaction-channel set is complete over capability space in the sense that any constitution-respecting upgrade is mountable---the separation of ``can'' (completeness of channels) from ``imagines'' (nucleation rate, the $p_{\min}$ of Theorem A) is the separation of chemical feasibility from nonzero reaction rate.

The conservation ledger records, quantity by quantity, what each operator does to the balance:
\begin{center}
\begin{tabularx}{\textwidth}{@{}>{\raggedright\arraybackslash}X >{\raggedright\arraybackslash}X >{\raggedright\arraybackslash}X >{\raggedright\arraybackslash}X >{\raggedright\arraybackslash}X@{}}
\toprule
Quantity & under Q & under R & under G & note \\
\midrule
particle number $N$ & conserved & not conserved (birth/death) & conserved & reactions change species count\\
total credit $\Sigma e$ & injected ($R_{\mathrm{batch}}$) & carried in/out & drained (rejected edits) & drive--dissipation balance $\Rightarrow$ NESS\\
total impulse $\Sigma v$ & not conserved & changes & projected (improving directions only) & directionality from $G$\\
gate score $V$ & --- & --- & monotone (strict) & verified Thm 5 monotonicity \\
\bottomrule
\end{tabularx}
\end{center}

\subsection*{F.3 Effective population gauges}
The fluctuation laws (Proposition~\ref{prop:fluct}) are statements about the \emph{effective} particle count $N_{\mathrm{eff}}$, which must be frozen before fitting (pre-registration discipline): (i) occupied-grid count = the number of (plugin $\times$ version $\times$ task) cells with nonzero credit in the run; (ii) total accepted edits; (iii) total credit $\Sigma e$ converted to an equivalent particle count. The gauges are measured, not assumed. In the active regime they track each other (formal1: $N{=}5$ arms give $3/1$ accepts and $\Delta V$ $+0.101/+0.011$; $N{=}50$ arms give $3/2$ accepts and $\Delta V$ $+0.189/+0.072$). At $N{=}500$ they diverge by orders of magnitude ($500$ occupied cells vs.\ $0/1$ accepts vs.\ $\Delta V$ $+0.000/+0.009$)---the pre-declared $10\times$-divergence warning fires where the rarefied reading places the device, i.e., the gauge check works as a designed diagnostic rather than holding by assumption. Correspondingly, the Proposition-D scaling is interpreted only inside the gauge-agreement regime; where the gauges diverge, the formula is diagnosed as failing, not confirmed. The rarefaction condition is the requirement that no instance dominates: $\max_i \sigma_i / \sum_j \sigma_j \ll 1$ (the bounded-kernel condition of Theorem B).

\subsection*{F.4 Theorem B: route choice and the enriched-state argument}
\emph{Honesty and route choice.} Not a new theorem: the standard result transposed, with an applicability verification (bounded kernel = no dominating particle; exchangeability). Two derivation routes exist for kinetic equations, and the choice matters. The \emph{Boltzmann--Grad route} (Lanford's short-time theorem~\cite{lanford1975}, extended to arbitrary long times for hard spheres by Deng--Hani--Ma~\cite{deng2024longtime}) starts from true deterministic binary collisions and must fight \emph{recollisions}---particles remembering whom they hit, which re-create correlations and break molecular chaos. The \emph{mean-field route} (Sznitman) assumes chaos and bounded interaction kernels. Our gas \emph{engineers recollisions by design}: the shared experience base makes particles re-encounter deliberately (retrieval is the re-collision channel), so a Boltzmann--Grad-type chaos-propagation claim is neither available nor desired, and the mean-field route is the honest level of description. The novelty is the operator set and what the gate does to the closure---with $G$ external, the system is not self-consistent in the classical sense; the daemon must remain outside the gas (Theorem E). The deeper reason the mean-field route is cleaner here than in physics: the hard-sphere obstacle is \emph{informational}---the one-particle density $f(x,v)$ does not record pair correlations, so recollision-built correlations progressively invalidate molecular chaos, and proving their negligibility took the 150-year program. The plugin gas loses nothing by construction: the instance state is \emph{enriched with its collision history}---the credit record $e$ summarizes whom the instance met and with what outcome, and the experience database is the population's stored collision history. Correlations that a gas loses are partially materialized as state, so we treat the one-instance description at $(x,v,e)$ as a \emph{Markov approximation}; residual dependence through the shared database is not proved negligible. Exchangeability is anchored operationally in A1 (permutation symmetry of composition, Cordis Thm.~20); the closure needs exchangeability plus a bounded kernel, not chaos surviving recollisions; retrieval locality provides the bound (an instance collides only within its tag neighborhood, so $\sigma_i$ is structurally finite). \emph{Measured premise check}: across the formal13a conduction arms (4 runs), the largest single-instance collision share is $\max_i \sigma_i / \sum_j \sigma_j = 0.067 \ll 1$---no dominating particle. This checks non-dominance over the measured runs---a diagnostic of the premise, not a proof of propagation of chaos.

\subsection*{F.5 Training thermodynamics}
With the Boltzmann convention $H[f] = \iiint f \log f\,dx\,dv\,de$, $H$ measures concentration (diversity reads as $-H$). The textbook statement---symmetric kernel and detailed balance imply $H$ monotone, relaxation to a Maxwellian---does \emph{not} hold here: the collision rule is sampled by an LLM and is asymmetric (proposals biased toward prior and retrieval history; attribution violates detailed balance). The correct no-gate form is a \emph{replicator--mutator} system: with credit injection and mutation but no selection, the population does not thermalize to the uniform Maxwellian but drifts to the \emph{prior modality} of the LLM, and diversity collapses monotonically (heuristic). Translated to learning: edit activity does not vanish (edits are copious)---but edits merely restate what the prior already knows, the learning signal is zero. This is the kinetic form of model collapse: self-referential generation without selection pressure contracts to the generator's prior; and it identifies the deepest theorem of the theory: \textbf{training's signal is supplied not by collisions but by the gate}; the strictness of the gate (strict inequality, tie rejection, frozen validation set) is the selection strength of evolutionary theory. It yields the tested distinction: \emph{edit frequency $\ne$ learning rate; learning rate = edit frequency $\times$ information content of the gate}.

With $G$ active, the system is driven and dissipative: collisions inject reward (drive), the gate drains rejected-edits' credit (dissipation). The drive--dissipation balance is not the no-gate prior modality but a NESS (non-equilibrium steady state): a peak around successful plugins with an exploration halo around it. Under this mapping: training convergence = relaxation to the NESS (the monotone accept sequence is the discrete projection of the NESS direction); the plateau = a NESS at vanishing entropy production (nothing left to adjudicate, drive fully cancelled by dissipation); and the gate acceptance rate $a(t)$ is the observational proxy for the entropy production rate. Escaping the plateau requires changing parameters, not applying force: raise temperature (edit budget / gate truncation), change the field (new task types), or anneal deliberately (avoid quenching to glass).

The Sagawa--Ueda-type bound~\cite{sagawa2010} $\langle W \rangle \le -\Delta F + kT\,I$ looks like a law the gate must satisfy---but a \emph{boundary statement is required}: defining free energy over capability space is not done, so within this system the bound is \emph{not} a quantitative thermodynamic conclusion; it is used as an information-theoretic analogy only. The quantitative channel is the Shannon one: the information content of the daemon's measurement channel, whose operational proxy is exactly the $\hat\delta_{\mathrm{gate}} - \hat\delta_{\mathrm{ind}}$ split of Proposition C. Its corollaries are the theory's sharpest readings: (i) a fixed validation set yields shrinking daemon efficiency---the population progressively adapts to the validation set, the daemon's measurements grow ``stale'' ($\delta_{\mathrm{gate}}>0,\ \delta_{\mathrm{ind}}\approx0$), which is \emph{the thermodynamic statement of Goodhart}, the violation of molecular chaos at the daemon--gas level (Prediction~4); (ii) fresh validation samples = information refill, restoring daemon efficiency; (iii) the constitution axiom (frozen/evolving separation) restated: if the daemon could itself evolve, monotonicity breaks [verified Theorem 7]---self-reference in thermodynamic language is the daemon becoming part of the gas.

\subsection*{F.6 Regime structure}
Before the regime map, we state the boundary of the analogy precisely, because it is the theory's load-bearing wall. The conservation ledger above shows total impulse $\Sigma v$ is \emph{not} conserved (edits inject impulse; the gate projects it). Without momentum conservation there is no Navier--Stokes closure, hence no literal turbulence cascade, no K\'arm\'an wake instability, no Magnus force, no Bernoulli law from momentum flux. We therefore claim \emph{none} of these. The fluid picture enters the paper in exactly one role: a \emph{systematic generator of falsifiable hypotheses}. Each classical phenomenon proposes an observable signature in the agent system (variance of acceptance intervals, locality of edit propagation, concentration of edit lineages); we measure the signature and report both outcomes. Where a signature matches, the finding is ``the analogy's prediction holds,'' a cross-domain observation; where it fails, the analogy's boundary is located. The transport-coefficient discussion below is a naming convention for measurable quantities (retrieval radius, edit damping, sampling noise), not a derivation of constitutive relations; the Euler-type system is displayed as a \emph{formal analogy} --- the moment hierarchy has the same shape --- and we do not perform, or claim, a Chapman--Enskog derivation for the plugin gas.

Control parameters: temperature $T$ (edit budget + LLM noise), density $\rho$ (plugins per unit capability volume), gate strictness. Three phases: \emph{gas} (free birth--death, no stable workflow motifs, jumpy curves, medium acceptance, membership volatile---early exploration); \emph{liquid} (co-activation clusters condense into recurring pipeline motifs, smooth increase, stable motifs with peripheral exploration---mid-stage convergence); \emph{solid} (frozen workflows, $T\to0$, flat curves, reproducible---released state). The microscopic mechanism of the phase picture sits with the reaction operator: sandbox mounting is the gas, materialization is the solidification step, and promotion along the lifecycle is the particle's path gas $\to$ liquid $\to$ solid. Granular-gas physics predicts that inelastically colliding driven granular systems must cluster: the uniform state is unstable, particles self-organize into clusters---the counterpart statement here is that populations with strong credit injection and weak selection drift toward motif condensation, and the gate is what keeps the system out of the collapsed state (heuristic). Where the liquid phase is deep, a nonzero-diffusion ``platform floor'' exists even at zero retrieval radius---retrieval is not free acceleration (heuristic).

In the continuum regime, moment closure gives the Euler-type system by construction
\begin{gather}
\partial_t\rho + \nabla\cdot(\rho u) = S_\rho,\qquad
\partial_t(\rho u) + \nabla\cdot(\rho u\otimes u + P) = \rho F + S_u,\notag\\
\partial_t(\rho E) + \nabla\cdot(\rho E u + Pu + q) = \rho F\cdot u + W_{\mathrm{drive}} - D_{\mathrm{gate}},
\end{gather}
with $E = \tfrac12|u|^2 + \tfrac{d_{\mathrm{eff}}}{2}T + \bar e$ (kinetic + temperature + credit internal energy), $W_{\mathrm{drive}} = \langle R_{\mathrm{batch}}\rangle\times$ collision rate, and $D_{\mathrm{gate}}$ the gate dissipation (with the Landauer~\cite{landauer1961} term as an analogy only, per the boundary statement). First-order (Navier--Stokes-type) corrections are written in a Chapman--Enskog-shaped formal analogy---viscosity $-2\mu\mathcal{S}$ and heat flux $q = -\kappa\nabla T$---with no expansion performed or claimed (analogy only). The {\em physical interpretation of the transport coefficients} is the theory's most direct object-level bridge: thermal conductivity $\kappa$ = KNN experience retrieval (successful mutations diffuse across task classes; $\kappa \approx$ retrieval radius${}^2\times$ retrieval frequency), viscosity $\mu$ = damping of edit perturbations in the population (disturbance--recovery experiments), diffusion $D$ = exploration noise (LLM sampling variance). The radius--specialization trade then reads as a transport trade: wider retrieval $\Rightarrow$ larger $\kappa$, faster equilibration of underperforming classes, but flatter specialization peaks---``retrieval breadth vs.\ specialization depth'' is not an engineering taste question but a heat-transport trade between diffusivity and peak gradient (semi-quantitative directional statement, labeled as such).

Finite $N$ couples conservation laws to white noise of strength $\propto\sqrt{T/N}$ (Landau--Lifshitz), giving the P1 prediction the pre-registered form: with the measurement statistics properly powered (arms spanning 2+ orders of magnitude in $N_{\mathrm{eff}}$; per-arm seed count fixed by a pre-registered power analysis, arms before seeds), run-to-run std should obey $\mathrm{std} \approx C\sqrt{T/N_{\mathrm{eff}}}$; a failure of decay signals excessive inter-particle correlation and the mean-field picture's breakdown---the diagnostic ordering between agent-level and machine-level explanations. This is quantified in the device measurements (Prediction~1). The regime clause is falsifiable from both sides: a powered probe at sparse density that nonetheless finds the $-1/2$ decay would falsify the gating, and a powered probe at the dense corner that still finds no decay would falsify the CLT reading itself.

\begin{table}[!tb]
\centering
\small
\caption{Corrected-gate verdict summary (L1 own-equation and boundary signatures; full license inventory: Tables~\ref{tab:license-l0}--\ref{tab:license-boundary}).}
\label{tab:verdicts}
\small
\begin{tabularx}{\textwidth}{@{}X X X X@{}}
\toprule
Signature & Verdict & Signature & Verdict \\
\midrule
Fluctuation scaling & undetectable ($b{=}3$) / $-0.58$ ($b{=}9$) & Intermittency (CV${\approx}1$) & supported ($0.39/0.78/0.87$) \\
Coanda attachment & supported ($41.2\%$) & Conduction flattening & supported ($\alpha{=}4$) \\
Tides period-locking & not supported (corrected re-run) & Brownian directional drift & supported ($0.223{\to}0.308$) \\
Boundary layer (cold start) & observed & Laminar developed flow & NESS plateau (500 r); entrance at $N{=}2000$ \\
L2: Bernoulli / Karman / waves & not supported & Magnus correlation & gate-dependent ($r$ $0.04{\to}0.64$); name not claimed \\
Vortex concentration & supported ($50\%$, corrected) & Viscous damping / drive-vs-gate & withdrawn / marginal ($+0.097$) \\
\bottomrule
\bottomrule
\end{tabularx}
\end{table}

\section{Deployed Practice Tools}\label{app:practice}

A practitioner takes home: the resolution law, the hitting-time horizon, reproducibility priced at $\mathrm{std}\sim\sqrt{T/N_{\mathrm{eff}}}$, the plateau dual-channel decision tree, and Theorem E as a design red line (never let the evaluator evolve with the evaluated). Four further deployables follow: reproducibility SLAs; silent-collapse alarms ($\delta_{\mathrm{gate}}$--$\delta_{\mathrm{ind}}$ split; fired on our device); gas--liquid--solid registry staging (edit concentration as monopoly metric); stopping-rule economics.

\end{document}